\providecommand{\thisvolume}[1]{this volume of EPTCS, Open Publishing Association} 
\def\note#1{\todo[color = green]{#1}}
\theoremstyle{plain}
\newtheorem{definition}{Definition} 
\newtheorem{theorem}{Theorem}
\newtheorem{lemma}{Lemma}
\newtheorem{proposition}{Proposition}
\newtheorem{corollary}{Corollary}
\def\INCRSYM{\ensuremath{\text{\syntaxfont{+}}}}
\def\RINCR#1{\ensuremath{#1\INCRSYM}}
\def\LINCR#1{\ensuremath{\INCRSYM#1}}
\def\NCOMP#1#2{\ensuremath{#1\cdot#2}}
\def\MCOMP#1{\NCOMP{#1}{#1}}
\def\DCOPY#1{\ensuremath{D\ARGS{#1}}}
\def\OUT#1{\ensuremath{\overline{#1}}}
\def\ssubst{\sigma_{\kern-2pt s}}
\def\tsubst{\sigma_{\kern-2pt t}}
\def\QDEPTH#1{{\normalfont\textsc{qd}}\ARGS{#1}}
\def\NAMESPACE#1{\SETNAME{N}_{\kern-3pt\HOLE[#1]}}
\newcommand{\NAMESERVER}[1][s]{\ensuremath{\REPL{N(x,z,v,#1)}}}
\def\LNS#1{\prescript{\INCRSYM\kern-1pt}{}{\NAMESPACE{#1}}}
\def\RNS#1{\NAMESPACE{#1}^{\INCRSYM}}
\def\CNS#1{\NAMESPACE{#1}^{\circ}}
\def\LNC{\prescript{\INCRSYM\kern-1pt}{}{N}}
\def\RNC{N^{\INCRSYM}}
\def\CNC{N^{\circ}}
\newcommand{\RHOBISIM}[1][\FN{P}]{\ensuremath{\BISIM^{#1}}}
\newcommand{\RHOWBISIM}[1][\FN{P}]{\ensuremath{\WBISIM^{#1}}}
\ifengineTF{\pdftexengine}{
  \def\ssubst{\sigma_s}
  \def\tsubst{\sigma_t}
  \def\NAMESPACE#1{\SETNAME{N}_{\kern-1pt\HOLE[#1]}}
  \def\LNS#1{\prescript{\INCRSYM\kern-6pt}{}{\NAMESPACE{#1}}}
  \def\NAMEEQ{\ensuremath{\equiv_{\kern-4pt\mathcal{N}}}}
}{}
\title{Encodability and Separation \\ for a Reflective Higher-Order Calculus%
\thanks{This work was supported by the Icelandic Research Fund Grant No.\@ 218202-05(1-3).}}
\author{Stian Lybech
\institute{Reykjavík University\\
Reykjavík, Iceland}
\email{stian21@ru.is}
}
\begin{document}
\maketitle

\begin{abstract}
The \RHO-calculus (Reflective Higher-Order Calculus) of Meredith and Radestock is a \PI-calculus-like language with some unusual features, notably, structured names, runtime generation of free names, and the lack of an operator for scoping visibility of names.
These features pose some interesting difficulties for proofs of encodability and separation results.
We describe two errors in a previously published attempt to encode the \PI-calculus in the \RHO-calculus by Meredith and Radestock.
Then we give a new encoding and prove its correctness, using a set of encodability criteria close to those proposed by Gorla, and discuss the adaptations necessary to work with a calculus with runtime generation of structured names.
Lastly we prove a separation result, showing that the \RHO-calculus cannot be encoded in the \PI-calculus.
\end{abstract}

\section{Introduction}
Process calculi are formalisms for modelling and reasoning about concurrent and distributed computations; a prominent example is the \PI-calculus of Milner, Parrow and Walker \cite{milner_walker_parrow1992picalc, PICALC}.
These languages commonly begin by assuming a \emph{countably infinite set of atomic names} $\NAMES$, ranged over by $x,y,z$.
This is not an unreasonable assumption for most purposes, but it does leave open the question of how this set of names should actually be interpreted, e.g.\@ if we were to create an implementation of the \PI-calculus or one of its variants \cite{turner1996phd, pierce2000pict, fournetgonthier2000joincalculus}.


A similar issue arises with the scoping operator $\NEW{x}P$, which is used to declare a new name $x$ with visibility limited to $P$.
Here the question becomes how we should choose this new name $x$, such that it is actually ensured to be unique.
For a process modelling a program running on a single computer, this can easily be solved, e.g.\@ with a counter; but if the process models a \emph{distributed} system, with programs running on distinct computers, the solution is less obvious. 
These issues are not directly handled in the \PI-calculus model, but only become apparent when we consider a more practical implementation of the set of names.

A radically different approach is taken in the Reflective Higher-Order (RHO or \RHO) calculus proposed by Meredith and Radestock in \cite{RHOCALC}.
These authors instead begin by positing that the set of names is built by a syntax, similar to the syntax for processes, and thus generated from a \emph{finite} set of elements.
One could imagine different possibilities for this syntax, but Meredith and Radestock here make the unusual choice of letting names be `quoted' processes, written $\QUOTE{P}$.
Thus, if $P$ is a process, then $\QUOTE{P}$ is a name.
This creates a mutually recursive definition, since processes also contain names.
The full syntax of the \RHO-calculus is then
\begin{center}
\begin{syntax}[h]
  P \in \PROC[rho] \IS \NIL 
    \OR P_1 \PAR P_2
    \OR \LIFT{x}{P}
    \OR \INPUT{x}{y} . P
    \OR \DROP{x}
\tabularnewline
x, y \in \QUOTE{\PROC[rho]} \IS \QUOTE{P}
\end{syntax}
\end{center}

Three of the constructs are as in the \PI-calculus: The \emph{nil} process, $\NIL$, is the inactive process; The \emph{parallel} construct, $P_1 \PAR P_2$, is the parallel composition of processes $P_1$ and $P_2$; and the \emph{input} construct, $\INPUT{x}{y}.P$, is a blocking operation, awaiting a communication on the channel $x$ of some name, which, upon reception, will be bound to $y$ in the continuation $P$.

The two remaining constructs are particular to the \RHO-calculus:
The \emph{lift} construct $\LIFT{x}{P}$ quotes the process $P$, thereby creating the name $\QUOTE{P}$, and outputs it on $x$; thus name generation is handled explicitly in the \RHO-calculus, rather than implicitly by a \PI-calculus style $\nu$-operator.
This is the second peculiarity of this calculus, since the newly generated name will be \emph{free} in the continuation of the corresponding input, and therefore also \emph{observable} if substituted for the subject of an input or lift.
As we shall later see, this feature is crucial for showing a separation result w.r.t.\@ the \PI-calculus.

Lastly, the \emph{drop} construct $\DROP{x}$ removes the quotes of the name to run the process within them, thereby enabling higher-order behaviour (i.e.\@ process mobility). 
This construct is thus similar to a process variable $X$ in e.g.\@ \HOPI{} \cite{SANGIORGIPHD, HOPICALC}, and is also the reason for the `reflective' epithet in the name of this calculus. 
It derives from Smith \cite{smith1982procedural_reflection}, who defined reflection as the ability of a program to turn code into data, compute with it, modify it, and turn it back into running code, which in the \RHO-calculus is captured by the combination of the lift and drop constructs, and the duality of names and processes.

Although superficially quite similar to the \PI-calculus, these features suggest that the \RHO-calculus is actually rather different.
As argued above, the use of structured terms as names, and explicit name generation, seem more realistic from an implementation perspective, as it places the problems of choosing the next name, and of ensuring freshness, within the language itself, rather than simply assuming that these features just work behind the scenes.
However, providing a \emph{solution} to these problems is not trivial, as we shall see below. 
For example, in \cite{RHOCALC} Meredith and Radestock also propose an encoding of the asynchronous, choice-free fragment of the \PI-calculus into the \RHO-calculus, reviewed in section~\ref{sec:mrencoding}, but as we shall show in section~\ref{sec:errors}, this encoding contains two fatal errors, invalidating their correctness result.

In what follows, we shall instead propose a different encoding of the \PI-calculus into the \RHO-calculus and formally prove its correctness w.r.t.\@ a number of encodability criteria closely related to those proposed by Gorla in \cite{GORLA}, but with some adaptations necessitated by the aforementioned peculiar features of the \RHO-calculus (Propositions~\ref{prop:independence}-\ref{prop:divergence_reflection}).
Using the same criteria we then derive a separation result, showing that the converse of this statement does not hold: there cannot be an encoding of the \RHO-calculus into the \PI-calculus satisfying the same criteria (Theorem~\ref{thm:separation}).\footnote{%
Full proofs of most  results are available in a technical report \cite{rhocalc2022techreport}.}
This result is quite surprising, and it suggests that we cannot always just reduce higher-order behaviour to the first-order paradigm, as Sangiorgi was able to do with \HOPI{} in \cite{HOPICALC}.
This is because higher-order behaviour in the \RHO-calculus is not just an extension on top of an already computationally complete language, as it is the case with \HOPI{} which extends the `first-order' \PI-calculus, but rather appears as a special case of the more general phenomenon of reflection, where processes (code) are communicated without modification.

\section{The Reflective Higher-Order Calculus}\note{I have completely reordered this section, because R1 found it confusing.}
We begin by presenting the \RHO-calculus following Meredith and Radestock in \cite{RHOCALC}.
As we have already seen the syntax above, we shall here focus on the semantics, which we shall give in terms of a reduction system.
Firstly, we shall need a notion of \emph{structural congruence} on processes, written $\STRUCTEQ$.
We shall postpone its precise definition slightly, but the intuition is that $P_1 \STRUCTEQ P_2$ denotes that $P_1$ and $P_2$ are the same process, up to some insignificant structural change, such as reordering of components in parallel composition or a change of bound names (\ALPHA-conversion).

Now, since names are quoted processes, this notion of structural congruence is extended to the set of names: the \emph{name equivalence} relation, written $\NAMEEQ$, is defined as the least equivalence on names closed under the following rules:
\begin{center}
\begin{semantics}
  \RULE[n-struct][n:struct]
    { P_1 \STRUCTEQ P_2 }
    { \QUOTE{P_1} \NAMEEQ \QUOTE{P_2} }
\end{semantics}
\begin{semantics}
  \RULE[n-drop][n:drop]
    { x_1 \NAMEEQ x_2 }
    {\QUOTE{\DROP{x_1}} \NAMEEQ x_2}
\end{semantics}
\end{center}

\noindent The point of \nameref{n:struct} is that if the processes within quotes have the same structure (up to structural congruence), then the quoted processes should also represent the same name.
Furthermore, by \nameref{n:drop}, we allow nested levels of quotes and drops to `cancel out.'

Next, we shall need the notions of free and bound names, $\FN{P}$ and $\BN{P}$, which are defined in the usual (syntactic) way, with input being the only formal binder in the language.
Thus $\BN{\INPUT{x}{y}.P} = \SET{y} \UNION \BN{P}$, and all other names are free.
We write $\NAMESOF{P} \DEFSYM \FN{P} \UNION \BN{P}$ for all the names in $P$, and we also write $\FRESH{x}P$ to mean that $x$ is \emph{fresh} for $P$.
However, with \emph{structured} names, it is no longer enough that $x \notin \NAMESOF{P}$; $x$ must also not be \emph{name equivalent} to any name in $P$.
Thus we say $\FRESH{x}P \DEFSYM \forall n \in \NAMESOF{P} \SUCHTHAT x \not\NAMEEQ n$.
Lastly, we write $P\REPLACE{x}{y}$ for the safe substitution of $x$ for $y$ within $P$.
However, given our considerations about $\NAMEEQ$ above, $P\REPLACE{x}{y}$ will not only replace $y$, but also any name that is \emph{name equivalent} to $y$.
Note also, in particular, that substitution does \emph{not} recur into processes under quotes.
Thus $\QUOTE{P}\REPLACE{x}{y} = \QUOTE{P}$ for all names $y$ where $y \not\NAMEEQ \QUOTE{P}$, and $\QUOTE{P}\REPLACE{x}{y} = x$ otherwise.

We shall now return to the definition of structural congruence: it is defined as the usual least congruence on processes, containing \ALPHA-equivalence and the abelian monoid rules for parallel composition with $\NIL$ as the unit element.
However, with structured terms as names, the congruence rules take on a slightly unusual form, since we now also need to compare names.
For example, to conclude $\INPUT{x_1}{y_1}.P_1 \STRUCTEQ \INPUT{x_2}{y_2}.P_2$ we would need the following rule in structural congruence:
\begin{center}
\begin{semantics}
  \RULE[s-in](\FRESH{z}P_1,P_2)
    { x_1 \NAMEEQ x_2 \AND P_1\REPLACE{z}{y_1} \STRUCTEQ P_2\REPLACE{z}{y_2} }
    { \INPUT{x_1}{y_1}.P_1 \STRUCTEQ \INPUT{x_2}{y_2}.P_2 }
\end{semantics} 
\end{center}

\noindent This yields another mutual recursion between structural congruence and name equivalence.

With these concepts in place, we can at last give the reduction rules for our semantics as follows:
\begin{center}
\begin{semantics}
  \RULE[\RHO-par][r:par]
    {P_1 \trans P_1'}
    {P_1 \PAR P_2 \trans P_1' \PAR P_2}
\end{semantics}
\begin{semantics}
  \RULE[\RHO-struct][r:struct]
    {P_1 \STRUCTEQ P_1' \AND P_1' \trans P_2' \AND P_2' \STRUCTEQ P_2}
    {P_1 \trans P_2}
\end{semantics}

\begin{semantics}
  \RULE[\RHO-com][r:com]
    { x_1 \NAMEEQ x_2 }
    { \INPUT{x_1}{y} . P_1 \PAR \LIFT{x_2}{P_2} \trans P_1\REPLACE{\QUOTE{P_2}}{y} }
\end{semantics}
\end{center}

The \nameref{r:par} and \nameref{r:struct} rules are standard (as in e.g.\@ the \PI-calculus); the former lets us conclude a reduction of one component in a parallel composition, whilst the latter allows us to rewrite the process, using structural congruence $\STRUCTEQ$, such that its form can match the conclusion of one of the other rules.

The \nameref{r:com} rule is also \emph{almost} standard:
The process $P_2$ is quoted and sent out over $x_2$, and the matching input receives it as the name $\QUOTE{P_2}$ and substitutes it for $y$ in the continuation $P_1$.
However, since names in the \RHO-calculus have structure, we must be able to explicitly conclude the equivalence $x_1 \NAMEEQ x_2$ between the two subjects in a communication.
This is thus different from calculi with atomic names where exact syntactic equality is (usually implicitly) required between subjects.

One last detail concerns substitution: 
In structural congruence, including \ALPHA-equivalence, $P\REPLACE{x}{y}$ is defined as the usual capture-avoiding substitution of names for names.
However, the substitution used in the \emph{semantics} is slightly different, as it is also used to handle the $\DROP{x}$ construct, which was not given a reduction rule above.
The semantic substitution also contains the clause $\DROP{x}\REPLACE{\QUOTE{P}}{y} = P$ if $x \NAMEEQ y$, thus replacing the \emph{process} $\DROP{x}$ with $P$; and $\DROP{x}\REPLACE{\QUOTE{P}}{y} = \DROP{x}$ if $x \not\NAMEEQ y$.
This is the only way in which a $\DROP{x}$ is ever executed, and it implies that the drop of a \emph{free} name is a deadlock, as it can never be touched by a substitution at runtime.

\section{The encoding of Meredith and Radestock}\label{sec:mrencoding}
In \cite{RHOCALC}, Meredith and Radestock proposed an encoding of the asynchronous, choice-free \PI-calculus, taking full abstraction w.r.t.\@ weak, barbed bisimilarity as their correctness criterion.
Unfortunately, that encoding is \emph{not} correct, as we shall now show. 
The counter-examples are instructive, as they highlight some of the difficulties inherent in working with a calculus without the assumption of an infinite set of atomic names and explicit scoping operators.

First, we recall the syntax and semantics of the asynchronous choice-free \PI-calculus, as given e.g.\@ in \cite{parrow2001introduction}.
Note that some of the constructs and concepts are similar to those found in the \RHO-calculus.
We shall therefore reuse some of the symbols and rely on context to distinguish whether a \PI-calculus or \RHO-calculus construct is meant.\note{Added these two lines to clarify overloading for R3.}
The syntax is:
\begin{formatcalculus}[pi]
\begin{center}
\begin{syntax}[h]
  P \in \PROC \IS \NIL \OR P_1 \PAR P_2 \OR \INPUT{x}{y}.P \OR \OUTPUT{x}{z} \OR \NEW{x}P \OR \REPL{P}
\end{syntax} 
\end{center}

\noindent The semantics is given in terms of a reduction system with the rules
\begin{center}
\begin{semantics}
  \RULE[\PI-com][pi:com]
    { }
    { \INPUT{x}{y}.P \PAR \OUTPUT{x}{z} \trans P\REPLACE{z}{y} }
\end{semantics}
\begin{semantics}
  \RULE[\PI-res][pi:res]
    { P \trans P' }
    { \NEW{x}P \trans \NEW{x}P' }
\end{semantics}
\end{center}

\noindent and with rules for parallel composition and structural congruence similar to those in the \RHO-calculus (rules \nameref{r:par} and \nameref{r:struct} above).
Structural congruence $\STRUCTEQ$ over $\PROC$ contains the same rules as in the \RHO-calculus, but with syntactic equality replacing name equivalence, and also the following rules for scoping and replication:
\begin{center}
\begin{math}
\begin{array}{r @{~} l}
  \NEW{x}\NIL     & \STRUCTEQ \NIL            \\
  \NEW{x}\NEW{y}P & \STRUCTEQ \NEW{y}\NEW{x}P
\end{array}\hspace{0.5cm}
\begin{array}{r @{~} l}
  \REPL{P}            & \STRUCTEQ P \PAR \REPL{P} \\
  \NEW{x}P_1 \PAR P_2 & \STRUCTEQ \NEW{x}\PAREN{P_1 \PAR P_2} \text{ if $x \notin \FN{P_2}$}
\end{array}
\end{math}
\end{center}
\end{formatcalculus}

Now for the encoding, assume a function $\PHI : \SETNAME{N} \to \QUOTE{\PROC[rho]}$ from \PI-calculus atomic names to \RHO-calculus names.
Since the set of \PI-calculus names is countably infinite, it can for example be mapped to the set of natural numbers.
The function \PHI{} could then be regarded as an enumeration of names (or a successor function), starting e.g.\@ from \QUOTE{\NIL} for the name $x_0$, and then letting the name $x_{i+1}$ be defined in terms of the name $x_i$ as for example $x_{i+1} \DEFSYM \QUOTE{\LIFT{x_i}{\NIL}}$.
In the sequel, we shall say that $\QUOTE{\LIFT{x}{\NIL}}$ is a \emph{left increment} of $x$, written $\LINCR{x}$.
Then we can generate a countably infinite sequence of names $x_0, x_1, x_2, \ldots$, starting from any name $x = x_0$, as $\LINCR{x} = x_1, \LINCR{\LINCR{x}} = \LINCR{x_1} = x_2, \ldots$ and so on.
This shows that the set of \PI-calculus names can be implemented as \RHO-names, as, by the definition of name equivalence and structural congruence, we have that $x \not\NAMEEQ \QUOTE{\LIFT{x}{\NIL}}$.

Correspondingly we can define $\RINCR{x} \DEFSYM \QUOTE{\INPUT{x}{\QUOTE{\NIL}}.\NIL}$ as a \emph{right increment} of $x$, which gives us another countably infinite sequence.
Another option is \emph{name composition} $\NCOMP{x}{y} \DEFSYM \QUOTE{ \LIFT{x}{\NIL} \PAR \INPUT{y}{\QUOTE{\NIL}}.\NIL}$, which yields yet another sequence with $x^2 = \MCOMP{x}, x^3 = \NCOMP{x^2}{x}, x^4 = \NCOMP{x^3}{x}, \ldots$ and so on.\note{Change the sequence of composition to address comment from R1.}

These are all examples of \emph{static quoting} techniques for consistent name generation, and each could be used to implement the function $\PHI$. 
Given such techniques, Meredith and Radestock then begin by assuming that all \PI-calculus names are already implemented as \RHO-names.
Their translation function $\PTRANS[P](n_0,p_0)$ requires two names as parameters, which must be chosen such that they are distinct from all the names in $P$, and furthermore that no name \emph{within} $P$ can ever be \emph{generated} from $n_0$ or $p_0$ by means of the aforementioned methods of static name generation. 
One way of ensuring this is by letting
\begin{equation*}
  n_0 = \QUOTE{ \prod_{x \in \FN{P} } \LIFT{x}{\NIL} } \qquad\text{and}\qquad %
  p_0 = \QUOTE{ \prod_{x \in \FN{P} } \INPUT{x}{\QUOTE{\NIL}}.\NIL }
\end{equation*}

\noindent where $\prod$ denotes generalised parallel composition.

The translation function also uses two short-hands: $\DCOPY{x} \DEFSYM \INPUT{x}{y} . \PAREN{\DROP{y} \PAR \LIFT{x}{\DROP{y}} }$ is a copying process used to implement replication; and $\OUTPUT{x}{y} \DEFSYM \LIFT{x}{\DROP{y}}$ simulates output in the \PI-calculus, since by \nameref{n:drop} we have that $\QUOTE{\DROP{y}} \NAMEEQ y$.
The translation $\PTRANS[P] = \PTRANS[P](n_0, p_0)$ \cite[p.\@ 13]{RHOCALC} is then given by the following recursive equations:\footnote{The translation has been adapted to use our notation for name increments, which we find more intuitive than $x^l$ and $x^r$, which is used in the original presentation. 
We also use $\OUTPUT{x}{y}$ rather than $x[y]$ for output, which is more in line with standard \PI-calculus notation.}
\begin{ptrans}{}{}{}{n,p}
\begin{center}
\begin{math}
\begin{array}{r @{~} l}
  \PTRANS[\NIL]                & = \NIL                                                                       \\
  \PTRANS[{\OUTPUT[pi]{x}{y}}] & = \OUTPUT{x}{y}                                                              \\
  \PTRANS[\INPUT{x}{y} . P]    & = \INPUT{x}{y} . \PTRANS[P]                                                   
\end{array}\hspace{1cm}
\begin{array}{r @{~} l}
  \PTRANS[P_1 \PAR P_2]        & = \PTRANS[P_1](\LINCR{n}, \LINCR{p}) \PAR \PTRANS[P_2](\RINCR{n}, \RINCR{p}) \\
  \PTRANS[\NEW{x}P]            & = \INPUT{p}{x} . \PTRANS[P](\LINCR{n}, \LINCR{p}) \PAR \OUTPUT{p}{n}         \\ 
                               & 
\end{array}
\end{math}
\begin{math}
\begin{array}{r @{~} l}
  \hspace{2.25cm}\PTRANS[\REPL{P}]            & = \LIFT{\NCOMP{n}{p}}{     \INPUT{\RINCR{n}}{n} . \INPUT{\RINCR{p}}{p} . \PAREN{ \PTRANS[P] \PAR \DCOPY{\NCOMP{n}{p}} \PAR \LIFT{\RINCR{n}}{ \OUTPUT{n}{n} } \PAR \LIFT{\RINCR{p}}{ \OUTPUT{p}{p} } } } \\ 
                               & ~\PAR \DCOPY{\NCOMP{n}{p}} \PAR \OUTPUT{\RINCR{n}}{\LINCR{n}} \PAR \OUTPUT{\RINCR{p}}{\LINCR{p}}
\end{array}
\end{math} 
\end{center}
\end{ptrans}

A central element in this translation is the encoding of replication, $\PTRANS[\REPL{P}](n,p)$, so we shall give some further details about its underlying intuitions.
Firstly, with higher-order process mobility, we can create a diverging process simply as $\LIFT{x}{\DCOPY{x}} \PAR \DCOPY{x}$.
This construction is reminiscent of the \LAMBDA-calculus $\Omega$-combinator $(\lambda x.xx) \lambda x .xx$:
$\DCOPY{x}$ will run the process it receives on $x$ whilst simultaneously making it available again on $x$, so by sending it a copy of $\DCOPY{x}$ itself, we obtain a process that continuously copies itself.
Then, by embedding another process $P$ in this construct, $\LIFT{x}{P \PAR \DCOPY{x}} \PAR \DCOPY{x}$, we obtain a process that will create arbitrarily many copies of $P$ at runtime.
Thus we can implement unguarded replication by using just a single name $x$.
However, this name $x$ must not be used by any other process, lest it might interfere with the replication.
This is achieved in the above encoding by composing the two name parameters, $n$ and $p$, to obtain a new name $\NCOMP{n}{p}$.

Secondly, if $\PTRANS[P](n,p)$ were simply copied in this fashion, any usage of the parameters $n$ and $p$ within the translation of $P$ would also be copied, which thus could create a name clash.
Therefore, the inner process is prefixed with two inputs that \emph{bind} $n$ and $p$ within the continuation.
In parallel, we then have two other processes, $\OUTPUT{\RINCR{n}}{\LINCR{n}}$ and $\OUTPUT{\RINCR{p}}{\LINCR{p}}$, that output the new names $\LINCR{n}$ and $\LINCR{p}$, which will be substituted for $n$ and $p$.
These processes are also copied, and in the next round of replication they will instead create the names $\QUOTE{\OUTPUT{\LINCR{n}}{\LINCR{n}}}$ and $\QUOTE{\OUTPUT{\LINCR{p}}{\LINCR{p}}}$, and so on, thereby implementing a \emph{runtime} form of name generation, similar to our static quoting technique.\note{The two previous paragraphs are new, at the request of R1.}

For the purpose of defining a notion of behavioural equivalence that is comparable to that of other calculi that do feature a $\nu$-operator, Meredith and Radestock define a \emph{name-restricted observation predicate} $\BARBSYM^{\SETNAME{N}}$ for the \RHO-calculus, parametrised with a set of names $\SETNAME{N}$.
The idea is to only allow observation of names in this set.
We follow their definition, but also allow the observation predicate to distinguish between input $x$, and output $\OUT{x}$:\footnote{The added distinction between input and output observations is only for use in our later development of a correct encoding, and does not invalidate our claim that the encoding by Meredith and Radestock is incorrect, since our counter-examples shall only rely on observing outputs.}
\begin{center}
\begin{semantics}
  \RULE[\RHO-bOut]
    { x_1 \NAMEEQ x_2 \AND x_1 \in \SETNAME{N} }
    { \LIFT{x_1}{P} \BARBSYM^{\SETNAME{N}} \OUT{x_2} }
\end{semantics}
\begin{semantics}
  \RULE[\RHO-bIn][rho:bin]
    { x_1 \NAMEEQ x_2 \AND x_1 \in \SETNAME{N} }
    { \INPUT{x_1}{y}.P \BARBSYM^{\SETNAME{N}} x_2 }
\end{semantics}
\begin{semantics}
  \RULE[\RHO-bPar][rho:bpar]
    { P_1 \BARBSYM^{\SETNAME{N}} \widehat{x} \quad\lor\quad P_2 \BARBSYM^{\SETNAME{N}} \widehat{x} }
    { P_1 \PAR P_2 \BARBSYM^{\SETNAME{N}} \widehat{x} }
\end{semantics}
\end{center}

\noindent where $\widehat{x}$ ranges over $x, \OUT{x}$.
An \emph{$\SETNAME{N}$-restricted barbed bisimulation} is then a symmetric, binary relation $\SETNAME{R}^{\SETNAME{N}}$ on processes, parametrised with a set of names $\SETNAME{N}$, such that $(P_1, P_2) \in \SETNAME{R}^{\SETNAME{N}}$ implies:
\begin{itemize}
  \item If $P_1 \trans P_1'$ then there exists a $P_2'$ such that $P_2 \trans P_2'$ and $(P_1', P_2') \in \SETNAME{R}^{\SETNAME{N}}$.
  \item If $P_1 \BARBSYM^{\SETNAME{N}} \widehat{x}$ then $P_2 \BARBSYM^{\SETNAME{N}} \widehat{x}$.
\end{itemize}

\noindent We say that $P_1$ is \emph{$\SETNAME{N}$-restricted barbed bisimilar} to $P_2$, written $\BISIM^{\SETNAME{N}}$, if there exists an $\SETNAME{N}$-restricted barbed bisimulation $\SETNAME{R}^{\SETNAME{N}}$ such that  $(P_1, P_2) \in \SETNAME{R}^{\SETNAME{N}}$.
The corresponding `weak' observation predicate is then written 
\begin{equation*}
  P \WBARBSYM^{\SETNAME{N}} \widehat{x} \DEFSYM \exists P' \SUCHTHAT P \trans* P' \land P' \BARBSYM^{\SETNAME{N}} \widehat{x}
\end{equation*}

 \noindent where $\trans*$ is the reflexive and transitive closure of $\trans$, and by replacing $P_2 \BARBSYM^{\SETNAME{N}} \widehat{x}$ with $P_2 \WBARBSYM^{\SETNAME{N}} \widehat{x}$, and $P_2 \trans P_2'$ with $P_2 \trans* P_2'$ in the above definition, we obtain the corresponding notion of a \emph{weak} $\SETNAME{N}$-restricted barbed bisimulation.
We say that $P_1$ is \emph{weakly $\SETNAME{N}$-restricted barbed bisimilar} to $P_2$, written $\WBISIM^{\SETNAME{N}}$, if there exists a weak $\SETNAME{N}$-restricted barbed bisimulation $\SETNAME{R}^{\SETNAME{N}}$ that relates them.

The corresponding observation predicate for the \PI-calculus is built by the following rules for observation on output, restriction and replication 
\begin{center}
\begin{semantics}
  \RULE[\PI-bOut]
    { x \in \SETNAME{N} }
    { \OUTPUT[pi]{x}{y} \BARBSYM^{\SETNAME{N}} \OUT{x} }
\end{semantics}
\begin{semantics}
  \RULE[\PI-bRes](x \neq z)
    { P \BARBSYM^{\SETNAME{N}} \widehat{x} }
    { \NEW{z}P \BARBSYM^{\SETNAME{N}} \widehat{x} }
\end{semantics}
\begin{semantics}
  \RULE[\PI-bRep]
    { P \BARBSYM^{\SETNAME{N}} \widehat{x} }
    { \REPL{P} \BARBSYM^{\SETNAME{N}} \widehat{x} }
\end{semantics}
\end{center}

\noindent and with rules similar to \nameref{rho:bpar} and \nameref{rho:bin} in the \RHO-calculus for observation on parallel composition and input, with strict syntactic equality replacing name equivalence in the premise of the latter rule.
The notions of a weak observation predicate, and (strong resp.\@ weak) $\SETNAME{N}$-restricted barbed bisimulation and bisimilarity for the \PI-calculus are then defined as in the \RHO-calculus.
We write $P \BARBSYM \widehat{x}$, $P \WBARBSYM \widehat{x}$, $P_1 \BISIM P_2$ and $P_1 \WBISIM P_2$ when $\SETNAME{N}$ is the set of all names, corresponding to no restriction on the names we can observe.
This yields the familiar notions of (strong resp.\@ weak) barbed bisimilarity in the \PI-calculus (as defined in e.g.\@ \cite{PICALC}).\note{I added the above to clarify that barbed bisimilarity in the \PI-calculus is not the same as in the \RHO-calculus.}

Given these notions of behavioural equivalence, Meredith and Radestock then state the following as a theorem \cite[p.\@ 14, Theorem 5.3]{RHOCALC}, but without providing a proof:\note{Change `equation' to `the claim stated in (1).'}
\begin{equation}\label{eq:mrtrans}
  P_1 \WBISIM P_2 \iff \PTRANS[P_1] \WBISIM^{\FN{P_1} \UNION \FN{P_2}} \PTRANS[P_2]
\end{equation}

\noindent with observation in the \RHO-calculus restricted to $\FN{P_1} \UNION \FN{P_2}$, i.e.\@ the free names in $P_1$ and $P_2$, implemented as \RHO-names.\footnote{%
Note that the original presentation \cite[p.\@ 14, Theorem 5.3]{RHOCALC} only has $P_1 \WBISIM P_2 \iff \PTRANS[P_1] \WBISIM^{\FN{P_1}} \PTRANS[P_2]$, but we regard this as a simple omission, since it trivially would not hold for the implication from right to left: 
Take for example $P_1 \DEFSYM \OUTPUT{x}{z}$ and $P_2 \DEFSYM \OUTPUT{x}{z} \PAR \OUTPUT{w}{z}$.
Then we have that $\FN{P_1} = \SET{x}$, and indeed $\PTRANS[P_1] \WBISIM^{\SET{x}} \PTRANS[P_2]$ since for $i \in \SET{1, 2}$ we have that $\PTRANS[P_i] \not\trans$ and $\PTRANS[P_i] \BARBSYM^{\SET{x}} \OUT{x}$; but obviously $P_1 \not\WBISIM P_2$, since $P_2 \BARBSYM \OUT{w}$ but $P_1 \not\BARBSYM \OUT{w}$.
}\note{I added this footnote and $\protect\UNION \protect\FN{P_2}$ to the equation.}

\section{The errors}\label{sec:errors}
We shall now see why the claim stated in~\ref{eq:mrtrans} does not hold.
Firstly, consider the following \PI-calculus processes:
\begin{formatcalculus}[pi]
\begin{equation*}
  P_1 \DEFSYM \REPL{\NEW{z}\OUTPUT{u}{z}} \qquad\text{and}\qquad P_2 \DEFSYM \NEW{z}\REPL{\OUTPUT{u}{z}}
\end{equation*}

Clearly, they represent different behaviours: $P_2$ will continuously send out the \emph{same} fresh name $z$ on $u$, whilst $P_1$ will send out \emph{different} fresh names, as we can see by applying \ALPHA-conversion after unfolding the replication (see \cite[p.\@ 11]{rhocalc2022techreport} for details).
We can also easily construct a testing context $C$ such that they can be distinguished by the (\PI-calculus) $\WBARBSYM \OUT{x}$ predicate, for example
\begin{equation*}
  C \DEFSYM \HOLE[~] \PAR \INPUT{u}{n_1} . \INPUT{u}{n_2} . \PAREN{ \OUT{n_1} \PAR n_2 . \OUT{x} }
\end{equation*}

\noindent where the objects for the input/output of $\OUT{n_1}, n_2$ and $\OUT{x}$ are ignored, as this only requires pure synchronisation.
Clearly, if the two names received on $u$ are the same, then $n_1$ and $n_2$ will be the same name, so they can synchronise and we will therefore be able to observe $\OUT{x}$ after 3 reduction steps.
And conversely, if the two names are distinct, then we will not observe $\OUT{x}$.
Thus $C\HOLE[P_1] \not\WBARBSYM \OUT{x}$ whilst $C\HOLE[P_2] \WBARBSYM \OUT{x}$ as argued above.

Now we make a slight adjustment to the two terms.
By composing an arbitrary process $Q$ with the inner output process $\OUTPUT{u}{z}$ we obtain the following:
\begin{equation*}
  P_1' \DEFSYM \REPL{\PAREN{\NEW{z} \OUTPUT{u}{z} \PAR Q }} \quad\text{and}\quad
  P_2' \DEFSYM \NEW{z}\REPL{\PAREN{ \OUTPUT{u}{z} \PAR Q }}
\end{equation*}

The actual behaviour of $Q$ is irrelevant; it is there solely to induce the parameter pair $(n, p)$ to be split into a `left pair' $(\LINCR{n}, \LINCR{p})$ and a `right pair' $(\RINCR{n}, \RINCR{p})$ that are passed to the translations of the left (resp.\@ right) parts of the parallel composition.
Note also that this changes nothing w.r.t.\@ observability of $\OUT{x}$: we still have that $C\HOLE[P_1'] \not\WBARBSYM \OUT{x}$ and $C\HOLE[P_2'] \WBARBSYM \OUT{x}$.

We shall now perform the actual translation. 
To make it more readable, we tabulate the names generated by static quoting during the translation and rename them as follows:
\begin{center}
\begin{math}
\begin{array}{r @{~} l @{~~~}   r @{~} l @{~~~}   r @{~} l @{~~~}  r @{~} l @{~~~}   r @{~} l  }
  \NCOMP{n}{p} & = a   & \RINCR{p}    & = c   & \LINCR{p}            & = e   & \RINCR{(\LINCR{p})}              & = g   & \LINCR{\LINCR{n}}   & = i \\
  \RINCR{n}    & = b   & \LINCR{n}    & = d   & \RINCR{(\LINCR{n})}  & = f   & \NCOMP{(\LINCR{n})}{(\LINCR{p})} & = h   & \LINCR{\LINCR{p}}   & = j 
\end{array} 
\end{math} 
\end{center}
\end{formatcalculus}

\noindent Note that \emph{none} of these names will be observable by the $\WBARBSYM^{\FN{P_1} \UNION \FN{P_2}}$ predicate, because they are generated by the translation, and hence are not in the set $\FN{P_1} \UNION \FN{P_2}$ of free names of $P_1$ and $P_2$.
Now, here is the translation:
\begin{formatcalculus}[rho]
\begin{align*}
  \PTRANS[P_1'](n,p)  & = \LIFT{a}{ \INPUT{b}{n} . \INPUT{c}{p} . \PAREN[Big]{\INPUT{e}{z} . \OUTPUT{u}{z} \PAR \OUTPUT{e}{d} \PAR \PTRANS[Q](f,g) \PAR \DCOPY{a} \PAR \LIFT{b}{\OUTPUT{n}{n}} \PAR \LIFT{c}{\OUTPUT{p}{p}} } } \\ 
                      & ~\PAR \DCOPY{a} \PAR \OUTPUT{b}{d} \PAR \OUTPUT{c}{e} \\
  \PTRANS[P_2'](n,p)  & = \INPUT{p}{z} . \LIFT{h}{ \INPUT{f}{d} . \INPUT{g}{e} . \PAREN[Big]{ \OUTPUT{u}{z} \PAR \PTRANS[Q](f,g) \PAR \DCOPY{h} \PAR \LIFT{f}{\OUTPUT{d}{d}} \PAR \LIFT{g}{\OUTPUT{e}{e}} } } \\
                      & ~\PAR \DCOPY{h} \PAR \OUTPUT{f}{i} \PAR \OUTPUT{g}{j} \PAR \OUTPUT{p}{n}
\end{align*}
\end{formatcalculus}

By performing the reductions, we see (not surprisingly) that $\PTRANS[P_2'](n,p)$ firstly performs the communication on $p$, which causes $z$ to be replaced by $n$, and the process afterwards expands into arbitrarily many instances of $\OUTPUT{u}{n}$ (see~\cite[p.\@ 12]{rhocalc2022techreport} for a reduction sequence).
On the other hand, the translated process $\PTRANS[P_1'](n,p)$ will immediately go through the replication steps, thereby creating arbitrarily many instances of the process $\INPUT{e}{z} . \OUTPUT{u}{z} \PAR \OUTPUT{e}{d}$ corresponding to the translation of $\NEW{z}\OUTPUT[pi]{u}{z}$.
This process obviously reduces to $\OUTPUT{u}{d}$ in one step.
However, precisely because of the aforementioned split of $(n,p)$ over the translation of parallel composition, the name $d$ will \emph{not} be updated by the replication context. 
This process will therefore \emph{also} repeatedly output the \emph{same} name $d$ on $u$, and the (translated) form of our testing context can therefore no longer distinguish the processes.

Both $\PTRANS[P_1']$ and $\PTRANS[P_2']$ thus reduce to arbitrarily many copies of either $\OUTPUT{u}{d}$ (for $P_1'$) or $\OUTPUT{u}{n}$ (for $P_2'$), and $u$ is the only name we can observe, as all the other names are created by the translation.
This then gives us our desired counter-example: by also translating the testing context we obtain a pair of processes where
\begin{equation*}
  C\HOLE[P_1'] \not\WBISIM C\HOLE[P_2'] \qquad\text{but}\qquad \PTRANS[{C\HOLE[P_1']}] \WBISIM^{\FN{C\HOLE[P_1']} \UNION \FN{C\HOLE[P_2']} } \PTRANS[{C\HOLE[P_2']}]
\end{equation*}

\noindent in contradiction of the implication from right to left in the claim stated in~\ref{eq:mrtrans}.

The detailed analysis above gives us a clear idea of the root cause of the problem: 
The translation of replication creates a context with the purpose of ensuring that the names $(n,p)$ used within it will repeatedly be substituted with new, fresh names $(\LINCR{n},\LINCR{p})$ dynamically built from the previous names, and these act as sources of new names for any occurrence of $\NEW{z}P$ within a replicated process. 
The point is precisely to ensure that each instance of a replicated $\nu$ operator will generate a unique new name, and the parameters $(n,p)$ on the translation function act as `handles' to access this resource; they are the names that have \emph{most recently} been replicated.

The problem arises because this property of being the `most recently replicated names' is not preserved by the translation of parallel composition: 
It splits the pair into a left and a right pair, used in the translation of the left and right parallel components:
\begin{equation*}
    \PTRANS[P_1 \PAR P_2](n,p) = \PTRANS[P_1](\LINCR{n}, \LINCR{p}) \PAR \PTRANS[P_2](\RINCR{n}, \RINCR{p})
\end{equation*}

\noindent Thus, the access to the most recently replicated names is lost in the translation of the inner processes, because, as we noted above, substitution does \emph{not} recur into processes under quotes.
Therefore, when the replication context increments $(n,p)$ at runtime, this update cannot touch the $n$ and $p$ embedded in the \emph{statically} incremented names $(\LINCR{n},\LINCR{p})$ and $(\RINCR{n},\RINCR{p})$ which the translation function generates for the translation of parallel composition.
This is why we added an arbitrary $Q$ to create a parallel composition in our counter-example above.

However, the error above is not the only one in the claim by Meredith and Radestock: whilst its root cause was the splitting of names over the translation of parallel composition, we can also create another example that is more directly related to the interplay between $\NEW{x}P$ and replication.
Consider the following processes:
\begin{equation*}
  P_1 \DEFSYM \REPL{\NEW{z}\OUTPUT[pi]{u}{z}}         \qquad\text{and}\qquad
  P_2 \DEFSYM \REPL{\NEW{q}\NEW{z}\OUTPUT[pi]{u}{z}}
\end{equation*}

Note that $P_1$ and $P_2$ are structurally congruent, since the new name $q$ is never used.
Thus $P_1 \WBISIM P_2$ also holds.
Yet when we translate those terms, the name incrementation in the translation of a term of the form $\NEW{x}P$ means that we again lose access to the most recently replicated names from the translation of replication.
This can be easily seen if we perform the translation stepwise, using the same tabulated list of names as before.
For both processes, the translation of replication is the same:
\begin{formatcalculus}[rho]
\begin{equation*}
  \PTRANS[\REPL{P}](n,p) = \LIFT{a}{ \INPUT{b}{n}.\INPUT{c}{p}.\PAREN{ \PTRANS[P](n,p) \PAR \DCOPY{a} \PAR \LIFT{b}{\OUTPUT{n}{n}} \PAR \LIFT{c}{\OUTPUT{p}{p}} } } \PAR \DCOPY{a} \PAR \OUTPUT{b}{d} \PAR \OUTPUT{c}{e}
\end{equation*}
\end{formatcalculus}

Now let $P_1' \DEFSYM \NEW{z}\OUTPUT[pi]{u}{z}$ and $P_2' \DEFSYM \NEW{q}\NEW{z}\OUTPUT[pi]{u}{z}$ and replace $\PTRANS[P](n,p)$ above with $\PTRANS[P_1'](n,p)$ and $\PTRANS[P_2'](n,p)$ respectively.
The translations of the inner processes yield:
\begin{align*}
  \PTRANS[{\NEW{z}\OUTPUT[pi]{u}{z}}](n,p)        & = \INPUT{p}{z}.\OUTPUT{u}{z} \PAR \OUTPUT{p}{n} \\
  \PTRANS[{\NEW{q}\NEW{z}\OUTPUT[pi]{u}{z}}](n,p) & = \INPUT{p}{q}.\PAREN{ \INPUT{e}{z}.\OUTPUT{u}{z} \PAR \OUTPUT{e}{d} } \PAR \OUTPUT{p}{n}
\end{align*}

\noindent which reduce to $\OUTPUT{u}{n}$ and $\OUTPUT{u}{d}$ respectively.
The names $n,p$ are bound in the replication context and will therefore be updated whenever the process replicates.
However, in the case of $P_2$, these names are \emph{statically} incremented in the translation of $\NEW{q}$ to yield the names $\LINCR{n} = d$ and $\LINCR{p} = e$, and these two names will therefore \emph{not} be updated at runtime, just as in the previous counter-example.
Consequently, in the case of $P_2$ the names sent out on $u$ will \emph{not} be distinct; they will all be the name $\LINCR{n} = d$.
We can therefore use the same testing context $C$ as in the previous example and proceed as before to generate another contradiction of the claim in~\ref{eq:mrtrans}; this time by \emph{distinguishing} the translated terms, although we have $C\HOLE[P_1] \WBISIM C\HOLE[P_2]$ in the \PI-calculus.
In summary, neither of the implications in the claim stated in~\ref{eq:mrtrans} hold.

%
%

\section{Our criteria for encodability}
Both of the previous examples illustrate the difficulties involved in reasoning about a parametrised translation.
Usually, the parameters represent a property or invariant that is assumed to be preserved throughout the translation, and a proof of correctness of the translation must therefore also include a proof that this invariant or property is indeed preserved.
For example, in the present case, the invariant assumed to hold for the parameters is precisely that they always refer to the most recently replicated names.
However, this assumption is never formally stated in the original \RHO-calculus paper \cite{RHOCALC}, and as the examples above show, it does not hold either.
Thus, a naive attempt to show correctness of the translation by induction in the clauses of the translation function may therefore seemingly go through, if the parameters are not considered.
This is doubly problematic in the present case, because the observation predicate used in the bisimulation relation over \RHO-calculus terms is parametrised so that we do \emph{not} observe the names created by the translation function.

Full abstraction, of which the claim in~\ref{eq:mrtrans} is an instance, may also not be the most informative correctness criterion, as argued by Gorla and Nestmann \cite{gorla_nestmann2014_full_abstraction}; for example, it does not necessarily prevent the translation from introducing divergence.
Also, as we are here more interested in showing that the \PI-calculus is `implementable' in the \RHO-calculus than in transferring equations between the source and target language, we shall instead follow the approach of such authors as Gorla \cite{GORLA}, Carbone and Maffeis \cite{CARBONEMAFFEIS} and others, and state a number of criteria for what we consider a valid encoding, where we also take the presence of parameters into account:

\begin{definition}[Language]
A \emph{language} $\LANG$ is a tuple $\LANG \DEFSYM (\PROC, \NAMES, \trans, \BEHEQ)$, where $\PROC$ is a set of terms, $\NAMES$ is a set of names, ${\trans} \subseteq \PROC \times \PROC$ is the reduction relation, with $\trans*$denoting the reflexive and transitive closure of $\trans$, and ${\BEHEQ} \subseteq \PROC \times \PROC$ is a notion of behavioural equivalence.
\end{definition}

We say a term $P \in \PROC$ \emph{diverges}, written $P \trans^{\omega}$, if $P$ has an infinite reduction sequence.
We use $\sigma : \NAMES \to \NAMES$ to denote a substitution function in $\LANG$.
For encodings, we need the notion of a \emph{source} and a \emph{target} language, and we shall generally use the convention of subscripting $s$ (for source) and $t$ (for target) to a language $\LANG$ or its components, including substitutions, and we let $S \in \PROC_s$ and $T \in \PROC_t$.

\begin{definition}[Encoding]
An \emph{encoding} of $\LANG_s$ into $\LANG_t$ is a tuple $(\PTRANS[~](N), \PHI, \delta)$, where $\PTRANS[~](N) : \PROC_s \to \PROC_t$ is a \emph{translation function}, parametrised with a finite list of names $N \in \NAMES_t^k$; and $\PHI : \NAMES_s \to \NAMES_t$ is a \emph{renaming policy}, mapping names in the source language into names in the target language; and $\delta : \NAMES_t^k \to \NAMES_t^k$ is a \emph{name derivation} function, mapping $k$-ary tuples of target names to tuples of equal arity for some $k$.
\end{definition}

The name derivation function $\delta$ allows us to express that the list of name parameters $N$ may evolve in some predictable way during the course of translation.
This seems necessary in particular when we are working with a language with structured terms as names.
In some cases we may also need to derive multiple tuples of names from the same input tuple; thus to comply with the requirement that $\delta$ is a single function, we could e.g.\@ envision using an extra, designated name as argument to control the derivation method used by $\delta$.
However, to abstract away from such details, we say that a tuple of names $N_2$ is \emph{derivable} from some tuple of names $N_1$, written $N_1 \leadsto N_2$, if $\delta\ARGS{N_1} = N_2$, and likewise that $N_1 \leadsto n$ if $n \in N_2$.
Note that we abuse the notation slightly and treat the lists as sets when the position of each individual component does not matter.\note{R1: Changed the definition of derivable to make it clear that it is a fixed function.}

\begin{definition}[Valid encoding]\label{def:valid_encoding}
We shall regard an encoding as \emph{valid}, if it satisfies at least the following criteria:
\begin{enumerate}
  \item\label{cond:parallel} Compositionality: $\PTRANS[S_1 \PAR \ldots \PAR S_n](N) = C \PAR \PTRANS[S_1]({N_1}) \PAR \ldots \PAR \PTRANS[S_n]({N_n})$ where $C$ is an optional coordinating context and $\FN{C} \subseteq \PHI\ARGS{\FN{S_1 \PAR \ldots \PAR S_n}} \UNION N$, and for each $i \in \SET{1, \ldots, n}$ we have that $N \leadsto N_i$. 

  \item\label{cond:substitution} Substitution invariance: $\PTRANS[S\ssubst](N) \BEHEQ \PTRANS[S](N)\tsubst$ for each $\ssubst$, where $\PHI\ARGS{\ssubst\ARGS{x}} = \tsubst\ARGS{\PHI\ARGS{x}}$.

  \item\label{cond:operational_correspondence} Operational correspondence: $S \trans* S' \iff \exists T' \SUCHTHAT \PTRANS[S](N) \trans* T' \land T' \BEHEQ \PTRANS[S'](N')$ and $N \leadsto N'$.
   
  \item\label{cond:observational_correspondence} Observational correspondence: We require that $N \INTERSECT \PHI\ARGS{\SETNAME{M}} = \EMPTYSET$ for any set of observable names $\SETNAME{M}$. 
  Then $P \BARBSYM^{\SETNAME{M}} \widehat{x} \iff \PTRANS[P](N) \WBARBSYM^{\PHI\kern2pt\ARGS{\SETNAME{M}}} \PHI\ARGS{\widehat{x}}$.\note{I changed this to emphasise that $N \INTERSECT \PHI\ARGS{\SETNAME{M}}$ is a condition, not a consequence.}

  \item\label{cond:divergence_reflection} Divergence reflection: $\PTRANS[P](N) \trans^{\omega} \implies P \trans^{\omega}$.
  
  \item\label{cond:parameter_independence} Parameter independence: $\PTRANS[P]({N_1}) \BEHEQ \PTRANS[P]({N_2})$ for each finite $N_1, N_2$.
\end{enumerate}
\end{definition}

These criteria are very close to those proposed by Gorla \cite{GORLA}, except that we have chosen observational correspondence, rather than the less specific success testing; i.e.\@ $P \trans* \BARBSYM \SUCCESS$ implies $\PTRANS[P](N) \trans* \BARBSYM \SUCCESS$.
This can easily be obtained, simply by choosing a specific name $x$ and then defining $\SUCCESS$ as a process with $x$ in subject position, as we did in our counter-examples above.

Furthermore, as we are here allowing parameters to appear on the translation, we have also added the criterion of parameter independence, which does not appear in \cite{GORLA}.
This is just to ensure that the behaviour of the translated terms will not depend on the exact choice of the parameters.
Likewise, we have also added name restriction to the observation predicate for observational correspondence $\WBARBSYM^{\SETNAME{M}}$, and we require that $N \INTERSECT \SETNAME{M} = \EMPTYSET$; i.e.\@ that the parameters should not be observable.
This seems a natural requirement, since we also require that $N \subseteq \SETNAME{N}_t$; i.e.\@ that the parameters belong to the target language.
They should therefore not be observable on the source terms.

\section{A correct encoding}\label{sec:pi_encoding}
As the previous examples have illustrated, the main difficulty in creating an encoding of the \PI-calculus in the \RHO-calculus, is how to achieve a robust source of fresh names at runtime that are guaranteed never to cause a name clash.
One way is to use a dedicated process for this purpose.
Consider the following process, where $\DCOPY{x}$ is defined as in section~\ref{sec:mrencoding}:
\begin{equation*}
  \NAMESERVER \DEFSYM \DCOPY{x} \PAR \LIFT{x}{ \INPUT{z}{a}.\INPUT{v}{r}.\PAREN[Big]{ \DCOPY{x} \PAR \LIFT{r}{\DROP{a}} \PAR \LIFT{z}{\LIFT{a}{\NIL}} } } \PAR \LIFT{z}{\DROP{s}}
\end{equation*}

This process is a \emph{name server}; it consistently generates names corresponding to consecutive left-increments of the initial name $s$ and outputs them on the `return address' $r$ received on $v$.
We refer to the above form as the \emph{initial state} of the name server and note that after two reductions it evolves to the form
\begin{equation*}
  \INPUT{v}{r}.\PAREN[Big]{ \DCOPY{x} \PAR \LIFT{r}{\DROP{s}} \PAR \LIFT{z}{\LIFT{\QUOTE{\DROP{s}}}{\NIL}} } \PAR \LIFT{x}{  \INPUT{z}{a}.\INPUT{v}{r}.\PAREN[Big]{ \DCOPY{x} \PAR \LIFT{r}{\DROP{a}} \PAR \LIFT{z}{\LIFT{a}{\NIL}} }  }
\end{equation*}

\noindent which we refer to as its \emph{ready state}, where it blocks, awaiting a request for a new name on $v$.
The first request will return $\QUOTE{\DROP{s}}$; a second request will return $\QUOTE{\LIFT{\QUOTE{\DROP{s}}}{\NIL}} = \LINCR{s}$, and so on.

We can verify that the names will all be distinct by considering the \emph{quote depth} of a name (resp.\@ process) defined thus:
\begin{center}
\begin{math}
\begin{array}{r @{~} l}
  \QDEPTH{\QUOTE{P}} & = %
  \begin{cases}
    \QDEPTH{x}      & \text{if } P \STRUCTEQ \DROP{x} \\
    1 + \QDEPTH{P}  & \text{otherwise}
  \end{cases} 
\end{array}\hspace*{0.3cm}
\begin{array}{r @{~} l}
  \QDEPTH{P} & = %
  \begin{cases}
    \max\SET{ \QDEPTH{x} | x \in \FN{P} } & \text{if } \FN{P} \neq \EMPTYSET \\
    0                                     & \text{otherwise}
  \end{cases}
\end{array}
\end{math}  
\end{center}

\noindent The quote depth of a name $x_1$ corresponds to the maximum number of calls to \nameref{n:struct} used to conclude name equivalence $x_1 \NAMEEQ x_2$ for some name $x_2$.
Thus, a necessary (but not sufficient) condition for two names to be name equivalent is that they have the same quote depth.
Names are therefore automatically stratified based on their quote depth:
\begin{lemma}[Stratification]\label{lemma:stratification}
$x_1 \NAMEEQ x_2 \implies \QDEPTH{x_1} = \QDEPTH{x_2}$.
\end{lemma}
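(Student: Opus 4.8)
The plan is to prove the stratification lemma by induction on the structure of the derivation of the name equivalence $x_1 \NAMEEQ x_2$. Since $\NAMEEQ$ is defined as the least equivalence relation closed under the rules \nameref{n:struct} and \nameref{n:drop}, any derivation of $x_1 \NAMEEQ x_2$ is built from these two rules together with the reflexivity, symmetry, and transitivity closure conditions of an equivalence relation. The goal is to show that $\QDEPTH{\cdot}$ is invariant across each of these closure steps.

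First I would handle the base rules. For \nameref{n:struct}, suppose $P_1 \STRUCTEQ P_2$ gives $\QUOTE{P_1} \NAMEEQ \QUOTE{P_2}$. Here the key observation is that $\QDEPTH{\QUOTE{P}}$ depends only on $\FN{P}$ and the structure of $P$ up to whether $P \STRUCTEQ \DROP{x}$; since structural congruence preserves free names and preserves the property of being congruent to a drop, we get $\QDEPTH{\QUOTE{P_1}} = \QDEPTH{\QUOTE{P_2}}$ directly from the definition of quote depth. This itself requires the sub-claim that $P_1 \STRUCTEQ P_2 \implies \QDEPTH{P_1} = \QDEPTH{P_2}$, which I would establish by noting that $\QDEPTH{P}$ is computed as a maximum over $\FN{P}$, and congruent processes have the same free-name set (each free name matched up to $\NAMEEQ$, handled by the inner induction hypothesis). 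For \nameref{n:drop}, suppose $x_1 \NAMEEQ x_2$ yields $\QUOTE{\DROP{x_1}} \NAMEEQ x_2$. By the first clause of the quote-depth definition, $\QDEPTH{\QUOTE{\DROP{x_1}}} = \QDEPTH{x_1}$, and by the induction hypothesis $\QDEPTH{x_1} = \QDEPTH{x_2}$, so the equality is preserved.

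The closure conditions are then routine: reflexivity gives $\QDEPTH{x} = \QDEPTH{x}$ trivially; symmetry preserves the equation since equality is symmetric; and transitivity follows by composing two instances of the induction hypothesis. Assembling these cases completes the induction and yields the claim.

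The main obstacle I anticipate is the entanglement of $\NAMEEQ$ and $\STRUCTEQ$ through their mutual recursion: proving the \nameref{n:struct} case properly requires knowing that structural congruence preserves quote depth of \emph{processes}, but structural congruence is itself defined using name equivalence in premises (as in rule (s-in)). Strictly speaking one should therefore perform a \emph{simultaneous} induction on the combined derivation system for $\STRUCTEQ$ and $\NAMEEQ$, proving both ``$P_1 \STRUCTEQ P_2 \implies \QDEPTH{P_1} = \QDEPTH{P_2}$'' and ``$x_1 \NAMEEQ x_2 \implies \QDEPTH{x_1} = \QDEPTH{x_2}$'' at once, with each case appealing to the other's hypothesis at a smaller derivation. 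The delicate point within this is the input rule, where $P_1\REPLACE{z}{y_1} \STRUCTEQ P_2\REPLACE{z}{y_2}$ together with $x_1 \NAMEEQ x_2$ must be shown to force equal quote depth of the two input-prefixed processes; this needs a small auxiliary fact that substitution $P\REPLACE{z}{y}$ with $\QDEPTH{z} = \QDEPTH{y}$ preserves quote depth, which holds because substitution replaces a free name by a name-equivalent one (hence, by hypothesis, one of equal quote depth) and does not recur under quotes.
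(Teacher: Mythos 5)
Your proof is correct and takes essentially the approach the paper intends: the paper's own justification is just the observation that quote depth counts applications of the rule \nameref{n:struct} and is therefore invariant under every closure rule of $\NAMEEQ$, which your rule induction makes precise, and your identification of the need for a simultaneous induction with $\STRUCTEQ$ (to break the mutual recursion, including the substitution side-condition in the input rule) is exactly the delicate point a full proof must handle. No gaps.
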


We can also partition names into \emph{namespaces} in the following way: let $\NAMESPACE{~}$ be a collection of name contexts, ranged over by $N$, with one or more holes occurring in the position of free names.
If $s$ is a name, then so is $N\HOLE[s]$ for some $N \in \NAMESPACE{~}$.
We write $\NAMESPACE{s} \DEFSYM \SET{ N\HOLE[s] | N \in \NAMESPACE{~} }$, and we say that $\NAMESPACE{s}$ is a \emph{namespace rooted at $s$}. 
Clearly, if $\QDEPTH{N} = n$ (counting $\QDEPTH{\HOLE} = 0$), and $\QDEPTH{s} = i$ and $\QDEPTH{s'} = j$, then $\QDEPTH{N\HOLE[s]} = n+i$ and $\QDEPTH{N\HOLE[s']} = n+j$.

Using the concepts of name contexts, we can describe our aforementioned three static quoting techniques as three distinct name space `templates,' built by the following grammars:
\begin{align*}
  \LNC \in \LNS{~} & \DCLSYM \HOLE[~] \ORSYM \QUOTE{\LIFT{\LNC}{\NIL}} \\
  \RNC \in \RNS{~} & \DCLSYM \HOLE[~] \ORSYM \QUOTE{\INPUT{\RNC}{\QUOTE{\NIL}}.\NIL} \\
  \CNC \in \CNS{~} & \DCLSYM \HOLE[~] \ORSYM \QUOTE{\LIFT{\CNC}{\NIL} \PAR \INPUT{\CNC}{\QUOTE{\NIL}}.\NIL}
\end{align*}

\noindent We shall use these namespace templates to implement the name derivation function $\delta$.
Thus, if we let $\widehat{N}$ denote any of the name contexts $\LNC, \RNC, \CNC$ then $s \leadsto s'$ if there exists a name context $\widehat{N}$ such that $s' \NAMEEQ \widehat{N}\HOLE[s]$.
This assures us that even if two namespaces use the same structure, e.g.\@ $\LNS{~}$, all their names will still be distinct \emph{if} their roots are not name equivalent, and neither is derivable from the other.\note{Changed this to account for new definition of derivability.}

In case of the name server, we see that it generates the namespace $\LNS{s}$, i.e.\@ the namespace of left-increments rooted at $s$, where $s$ is a parameter.
Thus if $s_1 \not\NAMEEQ s_2$ and neither is derivable from the other, then $\NAMESERVER[s_1]$ and $\NAMESERVER[s_2]$ will generate similarly structured namespaces, $\LNS{s_1}$ and $\LNS{s_2}$, but consisting of different sets of names.
Yet we can easily construct a mapping $\LNS{s_1} \mapsto \LNS{s_2}$ simply by replacing $s_1$ with $s_2$ within each name $\LNC\HOLE[s_1] \in \LNS{s_1}$.
This will be important in the proof for parameter independence below.\note{R1: What is the purpose of this mapping? I added this line as answer.}

Based on these considerations we can now construct our encoding.
We let the encoding be defined as $\PTRANS[P] \DEFSYM \PTRANS[P](n,v) \PAR \NAMESERVER[s]$, where we assume we can choose the names $n,v,x,z,s$ such that they are distinct from all free names in $P$ and $n,v,x,z \not\in \LNS{s}$.
As in the encoding by Meredith and Radestock, we shall assume that all \PI-calculus names are implemented as \RHO-names, and thus we shall generally omit explicit reference to $\PHI$ in the following.
We shall also limit ourselves to the \PI-calculus fragment with only input-guarded replication, to ensure that the encoding does not introduce divergence, unlike the encoding by Meredith and Radestock which replicates eagerly and therefore always diverges.\footnote{This is only a slight limitation, as we can use input-guarded replication to encode full replication.
Note also that having only input-guarded replication would not have prevented any of the errors described in section~\ref{sec:errors}.}
This can be achieved by prefixing the object of the lift with an input construct, i.e.\@ $\LIFT{n}{\INPUT{x}{y}.\PAREN{\DCOPY{n} \PAR P}}$, since
\begin{equation*}
  \DCOPY{n} \PAR \LIFT{n}{\INPUT{x}{y}.\PAREN{\DCOPY{n} \PAR P}} \trans \INPUT{x}{y}.\PAREN{\DCOPY{n} \PAR P} \PAR \LIFT{n}{\INPUT{x}{y}.\PAREN{\DCOPY{n} \PAR P}}
\end{equation*}

\noindent and the process then blocks until it receives a communication on $x$.
Given these considerations, the translation function $\PTRANS[~](n,v)$ is then given by the following equations:
%
\begin{center}
\begin{math}
\begin{array}{r @{~} l}
  \PTRANS[\NIL](n,v)                  & = \NIL                                                            \\
  \PTRANS[P_1 \PAR P_2](n,v)          & = \PTRANS[P_1]({\LINCR{n},v}) \PAR \PTRANS[P_2]({\RINCR{n},v})    \\ 
  \PTRANS[\INPUT{x}{y}.P](n,v)        & = \INPUT{x}{y} . \PTRANS[P](n,v)                                  \\
\end{array}\hspace{0.5cm}
\begin{array}{r @{~} l}
  \PTRANS[{\OUTPUT[pi]{x}{z}}](n,v)   & = \OUTPUT{x}{z}                                                   \\
  \PTRANS[\NEW{x}P](n,v)              & = \OUTPUT{v}{n} \PAR \INPUT{n}{x} . \PTRANS[P]({\MCOMP{n},v})     \\
  \PTRANS[\REPL{\INPUT{x}{y}.P}](n,v) & = \DCOPY{n} \PAR \LIFT{n}{ \INPUT{x}{y}.\PAREN{ \DCOPY{n} \PAR \PTRANS[P]({\MCOMP{n},v}) } }
\end{array}
\end{math}
\end{center}

The idea is that we simplify the `bookkeeping' involved in runtime name generation by isolating it to a single, contextual process.
This prevents errors of the first kind in the encoding by Meredith and Radestock, which resulted from processes losing access to the most recently replicated names.
Here, the name $v$ is used by all processes to contact the name server, and since it is never updated this access can never be lost.
Conversely, the name $n$, which is used for the `return address,' as well as for replication, is \emph{always} updated incrementally, during the translation.
It is never bound or reused, unlike in the translation by Meredith and Radestock, where the replication context used $\LINCR{n}, \LINCR{p}$ but also bound $n,p$ and passed them to the inner translation of $P$, which resulted in the second kind of error.
We say that a name is \emph{unique for the translation} if it is never generated more than once by the translation function, and this is the invariant that should hold for the parameter $n$: 

\begin{lemma}[Uniqueness]\label{lemma:uniqueness}
For each clause $\PTRANS[{C\HOLE[P]}](n,v) = \PTRANS[C](n,v)\HOLE[{\PTRANS[P](n',v)}]$, where $n \leadsto n'$, and $\PTRANS[C](n,v)$ contains a set of names $N' = \SET{n_1, \ldots, n_k}$ such that $n \leadsto N'$, it holds that if $n$ is unique for the translation, then so are $n_1, \ldots, n_k$ and $n'$.
\end{lemma}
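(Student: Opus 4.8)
The plan is to reduce the statement to two ingredients: structural properties of the three name-derivation templates $\LNC$, $\RNC$, $\CNC$, and an observation on how a single parameter value flows through the translation. I would first record auxiliary facts about the single-step increments $\LINCR{m}$, $\RINCR{m}$ and $\MCOMP{m}$; then perform a case analysis on the six translation clauses to identify the context names $n_1,\dots,n_k$ and the recursive parameter $n'$ and to check that $n\leadsto n_i$ and $n\leadsto n'$; and finally show that a hypothetical second generation of any derived name collapses, via injectivity of the templates, into a second generation of $n$, contradicting the hypothesis that $n$ is unique.

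For the auxiliary facts I would argue by inversion on \nameref{n:struct} and \nameref{n:drop} and on the congruence rules. (i) Each increment raises quote depth by exactly one: the free names of $\LIFT{m}{\NIL}$, of $\INPUT{m}{\QUOTE{\NIL}}.\NIL$ and of $\LIFT{m}{\NIL}\PAR\INPUT{m}{\QUOTE{\NIL}}.\NIL$ are all $\SET{m}$, so the definition of $\QDEPTH{\cdot}$ gives $\QDEPTH{\LINCR{m}}=\QDEPTH{\RINCR{m}}=\QDEPTH{\MCOMP{m}}=1+\QDEPTH{m}$, and hence, by Lemma~\ref{lemma:stratification}, an increment of $m$ is never name-equivalent to $m$. (ii) Each increment is injective up to $\NAMEEQ$: e.g.\ $\LINCR{m_1}\NAMEEQ\LINCR{m_2}$ is not of the shape required by \nameref{n:drop}, so it must come from \nameref{n:struct}, giving $\LIFT{m_1}{\NIL}\STRUCTEQ\LIFT{m_2}{\NIL}$ and therefore $m_1\NAMEEQ m_2$ by the congruence rule for lift; the other two are analogous. (iii) The ranges of the three increments are pairwise disjoint up to $\NAMEEQ$, since a quoted lift, a quoted input and a quoted parallel composition differ in their outermost constructor and so are never structurally congruent. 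With these in hand the clause analysis is routine: $\PTRANS[\NIL]$ and $\PTRANS[{\OUTPUT[pi]{x}{z}}]$ derive nothing; for $\PTRANS[{\INPUT{x}{y}.P}]$ we have $n'=n$ and no increment of $n$ in the context; for $\PTRANS[{P_1\PAR P_2}]$ the two recursive parameters are $\LINCR{n}$ and $\RINCR{n}$; and for both $\PTRANS[{\NEW{x}P}]$ and $\PTRANS[{\REPL{\INPUT{x}{y}.P}}]$ the sole proper increment is $n'=\MCOMP{n}$, the context containing only $n$ itself.

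For uniqueness proper, take $m\in\SET{n_1,\dots,n_k,n'}$; the case $m\NAMEEQ n$ is immediate, so assume $m$ is a proper increment, $m=t(n)$ for one of the three templates $t$. Since every generating clause produces its new names by applying a single increment to its incoming parameter, a second generation of $m$ would give $t'(m')\NAMEEQ m$ for some clause instance with incoming parameter $m'$ and increment $t'$. By (iii) we get $t'=t$ and by (ii) $m'\NAMEEQ n$, so the second generation happens at a clause instance whose incoming parameter is name-equivalent to $n$. It then suffices to observe that at most one generating clause instance can carry incoming parameter $\NAMEEQ n$: the incoming parameter is left unchanged only by the input clause and is strictly incremented by the parallel, restriction and replication clauses, so the clause instances reached with parameter $n$ form a single, non-branching input chain ending in at most one generating clause, every node strictly below which carries an incremented---hence by (i) and Lemma~\ref{lemma:stratification} different---parameter. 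Therefore the two generations coincide, contradicting the hypothesis, and $m$ is unique for the translation.

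The main obstacle is exactly this last structural observation. Because the input clause threads $n$ unchanged, the same parameter may label several clause instances, and the argument must separate this harmless duplication from the genuine event of $n$ being \emph{incremented}, which may happen at most once along the chain. The delicacy is that input prefixing is the only clause that is simultaneously non-branching and non-incrementing, so precisely it is what keeps the nodes sharing a parameter linearly ordered with a single generating endpoint. The second delicate point is (ii)--(iii), which rest on inverting the mutually recursive definitions of $\NAMEEQ$ and $\STRUCTEQ$; here the separation of the composition template $\CNC$---shared by $\NEW{x}P$ and $\REPL{\INPUT{x}{y}.P}$---from the left and right increments is what prevents names produced by syntactically different clauses from ever clashing.
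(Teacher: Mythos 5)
Your proof is correct and follows essentially the same route as the paper, which disposes of this lemma in two sentences: examine each clause and observe that every derived parameter either has its quote depth increased or is shifted into a new namespace by composition. Your elaboration — the injectivity and pairwise disjointness of the three increment templates, and the observation that the non-incrementing input clause only threads the parameter along a linear chain with at most one generating endpoint — supplies detail the paper leaves implicit, and is consistent with it.
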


This can easily be shown by examining the clauses of the translation function, assuming $n$ is unique.
For every usage of $n$ in a clause, we always either increase the quote depth of the parameter we pass to the inner call to the translation, or we shift the parameter into a new namespace by composition.
Furthermore, the behaviour of the translated process does not depend on the structure of the name parameter $n$, as long as $n$ is unique:
\begin{proposition}[Independence of parameters]\label{prop:independence}
If $\FRESH{n,n',s,s'}\NAMESOF{P}$ and all are unique for the translation, then $\PTRANS[P](n,v) \PAR \NAMESERVER[s] \BISIM^{\FN{P}} \PTRANS[P](n',v) \PAR \NAMESERVER[s']$. 
\end{proposition}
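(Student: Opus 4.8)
The plan is to exhibit a single bijective renaming $\theta$ on names that carries the left-hand encoding onto the right-hand one, and then to observe that renaming by such a $\theta$ is itself a \emph{strong} $\FN{P}$-restricted barbed bisimulation (which matches $\BISIM^{\FN{P}}$, not the weak variant). Concretely, I would define $\theta$ to act as the identity on $\FN{P}$ and on the server names $v,x,z$; to map the server root $s$ to $s'$, and hence the whole namespace $\LNS{s}$ onto $\LNS{s'}$ by relabelling roots as indicated above; and to map each name derivable from the translation parameter $n$ to the corresponding name derivable from $n'$, respecting the three derivation operations, i.e.\@ $\theta(\LINCR{m}) = \LINCR{\theta(m)}$, $\theta(\RINCR{m}) = \RINCR{\theta(m)}$ and $\theta(\MCOMP{m}) = \MCOMP{\theta(m)}$ for every $m$ in the parameter namespace. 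The first task is to check that $\theta$ is well defined: by the freshness hypothesis $\FRESH{n,n',s,s'}\NAMESOF{P}$ the four namespaces rooted at $n,n',s,s'$ are pairwise disjoint and disjoint from $\FN{P} \UNION \SET{v,x,z}$, and by Lemma~\ref{lemma:uniqueness} together with Lemma~\ref{lemma:stratification} the names within each namespace are all distinct, so $\theta$ is a genuine bijection. I would then verify that $\theta$ preserves name equivalence, i.e.\@ $x_1 \NAMEEQ x_2 \iff \theta(x_1) \NAMEEQ \theta(x_2)$, which holds because $\theta$ merely relabels the fresh, atomic roots while leaving the quoting structure on which \nameref{n:struct} and \nameref{n:drop} operate untouched.

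Next I would prove the key \emph{renaming lemma}: $(\PTRANS[P](n,v) \PAR \NAMESERVER[s])\theta \STRUCTEQ \PTRANS[P](n',v) \PAR \NAMESERVER[s']$. The server component is immediate, since $s$ occurs in $\NAMESERVER[s]$ only inside $\LIFT{z}{\DROP{s}}$ and $\theta$ fixes $x,z,v$, so $\NAMESERVER[s]\theta = \NAMESERVER[s']$. For the translation component I would argue by structural induction on $P$, generalising the statement to $\PTRANS[P](m,v)\theta \STRUCTEQ \PTRANS[P](\theta(m),v)$ for every parameter $m$ in the namespace of $n$. The $\NIL$, input and output clauses are trivial, as their subjects lie in $\FN{P}$, which $\theta$ fixes; the parallel clause uses $\theta(\LINCR{n}) = \LINCR{n'}$ and $\theta(\RINCR{n}) = \RINCR{n'}$ to match the split parameters and then the induction hypothesis on each component; and the $\NEW{x}P$ and replication clauses use $\theta(\MCOMP{n}) = \MCOMP{n'}$ together with the induction hypothesis, noting that $\theta$ fixes the shared name $v$ used to contact the server.

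It then remains to show that the symmetric closure of $\SETNAME{R} \DEFSYM \SET{ (Q, Q\theta) | Q \in \PROC[rho] }$ is a strong $\FN{P}$-restricted barbed bisimulation (the inverse pairs are of the same form, since $\theta^{-1}$ is again a bijective, $\NAMEEQ$-preserving renaming fixing $\FN{P}$). For the reduction clause, $Q \trans Q'$ implies $Q\theta \trans Q'\theta$ because $\theta$ preserves the premise $x_1 \NAMEEQ x_2$ of \nameref{r:com} and commutes with the semantic substitution, i.e.\@ $(Q'\REPLACE{\QUOTE{P_2}}{y})\theta \STRUCTEQ (Q'\theta)\REPLACE{\theta(\QUOTE{P_2})}{\theta(y)}$; the converse direction follows by applying the inverse bijection $\theta^{-1}$. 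For the barb clause, since $\theta$ is a bijection fixing $\FN{P}$ pointwise and preserving $\NAMEEQ$, no name outside $\FN{P}$ is mapped into $\FN{P}$, whence $Q \BARBSYM^{\FN{P}} \widehat{x} \iff Q\theta \BARBSYM^{\FN{P}} \widehat{x}$ for every observable $x \in \FN{P}$. Instantiating $Q \DEFSYM \PTRANS[P](n,v) \PAR \NAMESERVER[s]$, combining with the renaming lemma, and using closure of $\BISIM^{\FN{P}}$ under $\STRUCTEQ$ then yields $\PTRANS[P](n,v) \PAR \NAMESERVER[s] \BISIM^{\FN{P}} \PTRANS[P](n',v) \PAR \NAMESERVER[s']$, as required.

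The main obstacle I anticipate is the interaction of $\theta$ with the non-standard treatment of names in the \RHO-calculus, in two places. First, establishing that $\theta$ is simultaneously a bijection and $\NAMEEQ$-preserving genuinely relies on stratification and uniqueness: without them two distinct parameter-derived names could turn out to be name equivalent, and $\theta$ would either fail to be well defined or fail to respect the premise of \nameref{r:com}. Second, because substitution does not recur under quotes, the clause $\theta(\LINCR{m}) = \LINCR{\theta(m)}$ must be built into the definition of $\theta$ on structured names rather than derived from it, and the substitution-commutation fact used in the reduction clause has to be checked against \emph{both} the syntactic substitution in $\STRUCTEQ$ and the semantic substitution that executes $\DROP{x}$; verifying that these two renamings agree on all the structured names produced during reduction is the most delicate part of the argument.
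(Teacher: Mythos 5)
Your proposal is correct and follows essentially the same route as the paper, which also argues by constructing a (finite) substitution from the namespace rooted at $n$ onto that rooted at $n'$ (and similarly $\LNS{s}$ onto $\LNS{s'}$) and observing that none of these names are visible to the $\BARBSYM^{\FN{P}}$ predicate. The paper only sketches this in a few lines, so your explicit bisimulation $\SET{(Q, Q\theta)}$ and the attention to defining $\theta$ homomorphically on structured names (since substitution does not recur under quotes) are welcome elaborations of the same idea rather than a different approach.
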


This follows from the fact that the translation only generates finitely many names, say, of the structure $\NAMESPACE{~}$, so we can construct a finite substitution $\tsubst : \NAMESPACE{n} \to \NAMESPACE{n'}$ and simply apply it to $\PTRANS[P](n,v)$ to obtain $\PTRANS[P](n',v)$.
Then as we know that $\FRESH{n,n',s,s'}\NAMESOF{P}$, and by construction $x \notin \FN{P}$ for each $x \in \NAMESPACE{s} \UNION \NAMESPACE{s'}$, none of these names can be observed by the $\BARBSYM^{\FN{P}}$ predicate, so they cannot be used to distinguish the two processes.
A similar argument can then be made for the name server and the two namespaces $\LNS{s}$ and $\LNS{s'}$ generated by it at runtime.

Next, we formulate a (mostly) standard result relating substitution in the two calculi:
\begin{proposition}[Substitution]\label{prop:substitution}
Let $\ssubst \DEFSYM \REPLACE{u}{w}$ denote substitution in the \PI-calculus, and let $\tsubst \DEFSYM \REPLACE{u}{w}$ denote substitution in the \RHO-calculus.
Then $\PTRANS[P\ssubst](n,v) = \PTRANS[P](n,v)\tsubst$ if $\FRESH{u,w}P,n,v,\NAMESPACE{n}$.
\end{proposition}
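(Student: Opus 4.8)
The plan is to prove the identity by structural induction on the \PI-calculus term $P$, with the induction hypothesis universally quantified over the name parameters: for every proper subterm $P'$ of $P$ and every pair $n'',v''$ with $\FRESH{u,w}P',n'',v'',\NAMESPACE{n''}$ we have $\PTRANS[P'\ssubst](n'',v'') = \PTRANS[P'](n'',v'')\tsubst$. Phrasing the hypothesis with arbitrary parameters is essential, because the translation never recurses with the same name: the clauses for parallel composition, restriction and replication pass on the derived parameters $\LINCR{n}$, $\RINCR{n}$ and $\MCOMP{n}$, so the hypothesis must be available for all of these. The two base cases are immediate. For $P = \NIL$ both sides are $\NIL$, and for $P = \OUTPUT[pi]{x}{z}$ the translation $\OUTPUT{x}{z} = \LIFT{x}{\DROP{z}}$ involves only name-for-name substitution in subject and (dropped) object position; since $\PHI$ is the identity here we have $x\ssubst = x\tsubst$ and $z\ssubst = z\tsubst$, so the two sides coincide.

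In each inductive case I would rewrite the left-hand side by first distributing $\ssubst$ syntactically over the source constructor, then applying the matching translation clause, and finally invoking the induction hypothesis on the translated subterm(s). For $P = P_1 \PAR P_2$ this produces $\PTRANS[P_1\ssubst](\LINCR{n},v) \PAR \PTRANS[P_2\ssubst](\RINCR{n},v)$, and two appeals to the hypothesis (with parameters $\LINCR{n}$ and $\RINCR{n}$) rewrite it as $(\PTRANS[P_1](\LINCR{n},v) \PAR \PTRANS[P_2](\RINCR{n},v))\tsubst = \PTRANS[P_1 \PAR P_2](n,v)\tsubst$, using that $\tsubst$ distributes over $\PAR$. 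The cases $P = \INPUT{x}{y}.P'$, $P = \NEW{x}P'$ and $P = \REPL{\INPUT{x}{y}.P'}$ are handled the same way, but here the translation introduces \emph{administrative material}: the channel $n$ and the reply $\OUTPUT{v}{n}$ of the restriction clause, and the copying process $\DCOPY{n}$ together with the guarding lift of the replication clause. The equality then hinges on showing that $\tsubst$ leaves all of this material fixed, so that it can be pulled back out to the front after the inner appeal to the hypothesis.

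The crux, and the only place the freshness hypothesis is really used, is precisely this invariance of the administrative names under $\tsubst = \REPLACE{u}{w}$. Because $\FRESH{u,w}n,v,\NAMESPACE{n}$ gives $n \not\NAMEEQ w$, $v \not\NAMEEQ w$ and $x' \not\NAMEEQ w$ for every $x' \in \NAMESPACE{n}$, terms such as $\OUTPUT{v}{n}$, $\INPUT{n}{x}.(\cdots)$ and $\DCOPY{n}$ are untouched by $\tsubst$; note in particular that substitution does not recur under the quotes hidden inside the composite names of $\NAMESPACE{n}$ (such as $\MCOMP{n}$), so it compares them atomically against $w$ and no name-equivalence clause fires accidentally. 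Two further points must be checked. First, the freshness hypothesis must be \emph{inherited} by every recursive call: each derived parameter $\LINCR{n}, \RINCR{n}, \MCOMP{n}$ lies in $\NAMESPACE{n}$ and its own namespace is a sub-collection of $\NAMESPACE{n}$, so $u,w$ remain fresh and the hypothesis applies verbatim. Second, the binders introduced or traversed in the input, restriction and replication clauses — the $y$, the $x$, and the auxiliary name inside $\DCOPY{\cdot}$ — must be kept distinct from $u$ and $w$, which is exactly what the safe, capture-avoiding reading of $\ssubst$ and $\tsubst$ supplies, up to \ALPHA-conversion and \nameref{n:struct}. I expect the bookkeeping around this administrative-name invariance, rather than any single case, to be the main source of tedium, since it is where the \RHO-calculus features — name equivalence, the non-recursion of substitution under quotes, and the structured parameters — all come together.
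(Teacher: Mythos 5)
Your proof follows essentially the same route as the paper's: induction on the clauses of the translation function, with the freshness condition $\FRESH{u,w}P,n,v,\NAMESPACE{n}$ doing the real work by guaranteeing that the substitution cannot touch any administrative name generated by the translation. Your additional care in quantifying the induction hypothesis over all admissible parameter pairs and in checking that freshness is inherited by the derived parameters $\LINCR{n}$, $\RINCR{n}$, $\MCOMP{n}$ is exactly the bookkeeping the paper's one-line sketch leaves implicit.
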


This is proved by induction in the clauses of the translation.
The condition $\FRESH{u,w}P,n,v,\NAMESPACE{n}$ ensures that the substitution cannot touch any of the names created by the translation, which is reasonable, since the substitutions we care about should derive from communications in the \PI-calculus, and not from some of the `internal' reductions in the \RHO-calculus that are used to simulate replication or requests for new names.

Our next result establishes that our translation preserves observability of subjects, as long as we restrict observations to the set of free names in $P$:
\begin{proposition}[Weak observational correspondence]\label{prop:observational_correspondence}
Let $\WBARBSYM^{\SETNAME{N}}$ be the least predicate such that $\PTRANS[S] \WBARBSYM^{\SETNAME{N}} \widehat{n}$ holds if either of the following conditions are satisfied:
\begin{enumerate}
  \item if $S = S_1 \PAR S_2$ and $\PTRANS[S_1] \WBARBSYM^{\SETNAME{N}} \widehat{n} \lor \PTRANS[S_2] \WBARBSYM^{\SETNAME{N}} \widehat{n}$
  \item if $S \neq S_1 \PAR S_2$ and $\PTRANS[S](n,v) \PAR \NAMESERVER \trans* T' \land T' \BARBSYM^{\SETNAME{N}} \widehat{n}$
\end{enumerate}

\noindent Then for any $x$, $P \BARBSYM^{\FN{P}} \widehat{x} \iff \PTRANS[P] \WBARBSYM^{\FN{P}} \widehat{x}$.\note{Put the conditions into an enumerate to make it more readable}
\end{proposition}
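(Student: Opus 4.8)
The plan is to prove the slightly stronger statement in which $\FN{P}$ is replaced by an arbitrary set of observable names $\SETNAME{N}$ that is disjoint from the internal names $x,z,v$ of the name server, from the namespace $\NAMESPACE{n}$ produced by the translation, and from the namespace $\LNS{s}$ produced by the server; the proposition is then the instance $\SETNAME{N} \DEFSYM \FN{P}$. This generalisation is needed because clause~(1) of the predicate $\WBARBSYM^{\SETNAME{N}}$ recurses on the \emph{same} set $\SETNAME{N}$ for the parallel components, whereas a statement phrased with $\FN{P_i}$ could not be fed back in. Before the induction proper I would isolate one auxiliary fact — \emph{server invisibility}: every term reachable from $\NAMESERVER$ has barbs only on $x,z,v$ and on names of $\LNS{s}$, all outside $\SETNAME{N}$, and the server only ever reacts to an explicit request placed on $v$ by the process it sits beside. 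Consequently the server contributes no observable barb, its internal reductions can neither create nor destroy an $\SETNAME{N}$-barb, and it may be freely shared or duplicated for the purpose of observation. This is what reconciles the single shared server appearing in clause~(2) with the per-component servers implicit in the recursive calls of clause~(1).

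With this in hand I would argue by structural induction on $P$, matching the clause of $\WBARBSYM^{\SETNAME{N}}$ to the shape of $P$. The case $P = P_1 \PAR P_2$ is definitional: on the \PI-side a barb of a parallel composition is a barb of one of the components, and clause~(1) distributes $\WBARBSYM^{\SETNAME{N}}$ in exactly the same way, so the induction hypothesis applied to each $P_i$ (with the unchanged $\SETNAME{N}$ and the incremented parameters $\LINCR{n},\RINCR{n}$, which remain unique by Lemma~\ref{lemma:uniqueness}) closes the case; Proposition~\ref{prop:independence} absorbs the parameter shift. The non-parallel cases go through clause~(2), i.e.\@ the ordinary weak barb of $\PTRANS[P](n,v) \PAR \NAMESERVER$: for $\NIL$ neither side has an observable barb; for $\OUTPUT[pi]{x}{z}$ the image $\LIFT{x}{\DROP{z}}$ carries precisely the output barb $\OUT{x}$; for $\INPUT{x}{y}.P'$ the prefix exposes exactly the input barb $x$ and guards $\PTRANS[P']$, so in a closed term no communication on $x$ is ever enabled and nothing further surfaces; and for $\REPL{\INPUT{x}{y}.P'}$ a single copying step of $\DCOPY{n} \PAR \LIFT{n}{\ldots}$ peels off one guarded copy, again exposing only the input barb $x$, which is the sole \PI-barb of a replicated input.

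The main obstacle is the restriction case $P = \NEW{x}P'$, whose image $\OUTPUT{v}{n} \PAR \INPUT{n}{x}.\PTRANS[P'](\MCOMP{n},v)$ shows, up front, only the unobservable barbs on $v$ and $n$. Here I would use the weak barb to run one full round-trip with the server: $\OUTPUT{v}{n}$ synchronises on $v$, the server answers on the return address $n$ with a name $s'$ lying in $\LNS{s}$, and $\INPUT{n}{x}$ binds $x$ to $s'$, leaving $\PTRANS[P'](\MCOMP{n},v)\REPLACE{s'}{x}$ beside a server that is once more in its ready state. By Proposition~\ref{prop:substitution} this term equals $\PTRANS[P'\REPLACE{s'}{x}](\MCOMP{n},v)$ — the side-condition holds because $s' \in \LNS{s}$ and the bound $x$ both lie outside $\NAMESPACE{n}$ — and because $s' \notin \SETNAME{N}$ the renaming $\REPLACE{s'}{x}$ touches exactly the (now hidden) barb on the restricted name $x$ and no other. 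Hence $\NEW{x}P' \BARBSYM^{\SETNAME{N}} \widehat{y}$, i.e.\@ $P' \BARBSYM^{\SETNAME{N}} \widehat{y}$ with $y \neq x$, is equivalent to $P'\REPLACE{s'}{x} \BARBSYM^{\SETNAME{N}} \widehat{y}$, which by the induction hypothesis (the server again ready at a fresh root, so Proposition~\ref{prop:independence} applies) matches the weak barb of $\PTRANS[P'\REPLACE{s'}{x}](\MCOMP{n},v) \PAR \NAMESERVER$; and the latter is reachable from the image of $\NEW{x}P'$ by the round-trip just described.

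The delicate points, which I expect to consume most of the work, are three. First, that each request is answered by exactly one name and that this name is always $\NAMEEQ$ to an element of $\LNS{s}$ — hence, by Lemma~\ref{lemma:stratification} and the freshness of $s$, outside $\SETNAME{N}$ — so that the restriction is faithfully realised as a substitution by an unobservable name; this rests on the reduction analysis of the initial and ready states of the server given above. Second, for the right-to-left direction, that no interleaving of server-internal reductions with reductions of $\PTRANS[P']$ can manufacture a spurious $\SETNAME{N}$-barb, which reduces to server invisibility. Third, when $P'$ is itself a parallel composition, that the single-server weak barb examined by clause~(2) agrees with the per-component reading of clause~(1); this is again an application of server invisibility together with the fact that barbs distribute over parallel composition, so that the one server can serve whichever component needs it to expose its barb.
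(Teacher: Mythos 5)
Your proposal is correct and follows essentially the same route as the paper, which proves this proposition by induction in the clauses of the translation and relies, exactly as you do, on the fact that the weak predicate splits over parallel composition without letting the components interact, so that the only auxiliary reductions are the unobservable server/replication/name-request steps. The paper only sketches this argument (the details are deferred to the technical report), but your elaboration — the generalisation of $\FN{P}$ to a suitable disjoint observation set, the server-invisibility invariant, and the one-round-trip treatment of $\NEW{x}P'$ via Propositions~\ref{prop:substitution} and~\ref{prop:independence} — is consistent with that sketch.
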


This is proved by induction in the clauses of the translation.
Note that we purposefully restrict the weak observation predicate to only allow reductions involved in replication and requesting a fresh name from the name server; i.e.\@ by splitting it directly over parallel compositions rather than allowing them to first interact. 
This is necessary for proving the implication from right to left in Proposition~\ref{prop:observational_correspondence}, since reductions might otherwise expose more names that are not immediately observable in the source terms.
This restriction can be lifted if we replace $\BARBSYM$ by $\WBARBSYM$ in the \PI-calculus, but we prefer this slightly more complicated formulation to illustrate that observability is strictly preserved, in the sense that any auxiliary steps required in the \RHO-calculus are `internal,' deriving either from a replication step, a request for a new name, or from the name server as it moves from its initial state to its ready state, and neither of these are observable by the $\BARBSYM^{\FN{P}}$ predicate.

Next we show that the translation preserves the semantic meaning of the source program:
\begin{proposition}[Operational correspondence]
$P \trans* P' \iff \PTRANS[P] \trans*\RHOWBISIM \PTRANS[P']$.
\end{proposition}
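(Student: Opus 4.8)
The plan is to prove the two directions of the biconditional separately, treating the forward direction ($\Rightarrow$) as \emph{completeness} (every source reduction is mimicked by the translation) and the backward direction ($\Leftarrow$) as \emph{soundness} (every target reduction sequence reflects a source one, up to $\RHOWBISIM$). In both directions I would argue by induction on the length of the reduction sequence, so that it suffices to analyse a single step and then compose with the induction hypothesis; the base case (zero steps) is immediate from reflexivity of $\RHOWBISIM$. The conceptual key is to separate target reductions into two kinds: \emph{administrative} steps --- namely the unfolding of a $\DCOPY{n}$ in the translation of replication, the two-step transition of $\NAMESERVER[s]$ from its initial state to its ready state, and the handshake between $\OUTPUT{v}{n}$ and the name server that resolves a $\NEW{z}$ --- and \emph{communication} steps, which arise from a \nameref{r:com} between the images of matching input and output guards and correspond one-to-one to a communication in the source. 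By Proposition~\ref{prop:observational_correspondence}, none of the administrative steps exposes a barb on a name in $\FN{P}$, so they will be absorbed by the weak, $\FN{P}$-restricted bisimilarity $\RHOWBISIM$.

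For the forward direction I would case on the single source step $P \trans P''$, reaching the redex by the structural and parallel rules as needed. In the plain communication case $\INPUT{x}{y}.Q \PAR \OUTPUT[pi]{x}{z} \trans Q\REPLACE{z}{y}$, compositionality gives $\INPUT{x}{y}.\PTRANS[Q]({\LINCR{n},v}) \PAR \OUTPUT{x}{z}$; since $\OUTPUT{x}{z} = \LIFT{x}{\DROP{z}}$, a \nameref{r:com} fires and produces $\PTRANS[Q]({\LINCR{n},v})\REPLACE{\QUOTE{\DROP{z}}}{y}$. By \nameref{n:drop} we have $\QUOTE{\DROP{z}} \NAMEEQ z$, so this is $\PTRANS[Q]({\LINCR{n},v})\REPLACE{z}{y}$, which by Proposition~\ref{prop:substitution} equals $\PTRANS[{Q\REPLACE{z}{y}}]({\LINCR{n},v})$, i.e.\@ the translation of the source reduct up to a change of the name parameter. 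The $\NEW{z}Q$ case is handled by first performing the administrative handshake with the name server, which binds $z$ to the fresh name $\QUOTE{\DROP{s}}$ consistently with $\NAMEEQ$, before the inner communication; and the replication case $\REPL{\INPUT{x}{y}.Q} \STRUCTEQ \INPUT{x}{y}.Q \PAR \REPL{\INPUT{x}{y}.Q}$ by first unfolding one $\DCOPY{n}$ and then proceeding as in the communication case. In every case the parameter passed to the inner translation differs from the one dictated by the clauses for $P''$ (we obtain $\LINCR{n}$, $\RINCR{n}$ or $\MCOMP{n}$ where a `fresh' parameter is expected); Proposition~\ref{prop:independence}, together with the uniqueness guaranteed by Lemma~\ref{lemma:uniqueness}, lets us reconcile these choices up to $\RHOWBISIM$, which is exactly why the statement is only up to bisimilarity.

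For the backward direction I would characterise the possible reducts $T'$ of $\PTRANS[P]$ and show that each is $\RHOWBISIM$ to $\PTRANS[P']$ for some $P'$ with $P \trans* P'$. Proceeding again by induction on the number of target steps, a single step is either administrative or a communication. An administrative step leaves the reachable source term $P'$ unchanged and yields a $T'$ still bisimilar to $\PTRANS[P']$; here Proposition~\ref{prop:observational_correspondence} guarantees that no new source barb is created and that a half-completed handshake or a dangling $\DCOPY{n}$ can always be driven to completion without affecting observable behaviour. A communication step on a source name corresponds, via the same \nameref{r:com}/\nameref{n:drop}/Proposition~\ref{prop:substitution} analysis as above, to a genuine source communication, advancing $P'$ accordingly. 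The subtle point is that the single, shared $\NAMESERVER[s]$ serialises all pending $\NEW{z}$-requests, so several translated restrictions compete for it in some order; I would argue by a diamond/confluence property of the administrative steps that this order is immaterial up to $\RHOWBISIM$, and that the freshly generated names all lie in the namespace $\LNS{s}$ and are therefore unobservable and pairwise distinct by Lemmas~\ref{lemma:stratification} and~\ref{lemma:uniqueness}.

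The main obstacle is the backward direction, and specifically the treatment of \emph{partial} administrative configurations: intermediate states in which a name-server handshake or a replication unfolding has been only partly performed are not literally of the form $\PTRANS[P']$ for any source $P'$, so establishing $T' \RHOWBISIM \PTRANS[P']$ requires exhibiting an explicit witnessing bisimulation (restricted to $\FN{P}$) relating such administrative residues to the translation of the corresponding source term. Showing that this relation is closed under reduction --- in particular that competing name-server requests can be commuted and that no administrative residue can ever contribute a barb on a name in $\FN{P}$ --- is where Proposition~\ref{prop:observational_correspondence} and the stratification of the name server's namespace $\LNS{s}$ (Lemma~\ref{lemma:stratification}) do the real work, and is the most delicate part of the argument.
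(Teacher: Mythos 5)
Your proposal matches the paper's own argument: both split the proof into completeness (by induction on the source reduction, using Proposition~\ref{prop:substitution} for the communication, restriction and replication cases and Proposition~\ref{prop:independence} to reconcile the shifted name parameters) and soundness (by induction on the target reduction sequence, where only the weaker $\trans*$-form holds because of the administrative steps from the name server, replication unfolding and fresh-name requests). The only difference is one of emphasis: the paper additionally notes that completeness can be strengthened to a single-step simulation up to \emph{strong} restricted bisimilarity, whereas your treatment of the administrative residues and their confluence in the soundness direction is more explicit than the paper's sketch, but neither changes the route.
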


This proof can be split into two parts.
For the forward direction (completeness) we can actually show the stronger statement that $P \trans P' \implies \PTRANS[P] \trans\trans*\RHOBISIM \PTRANS[P']$ by induction in the reduction semantics of the \PI-calculus, as every reduction in the \PI-calculus is matched by one or more steps in the \RHO-calculus.
The proof often relies on Proposition~\ref{prop:substitution} for the cases of communication, replication and $\NEW{x}P$, and on Proposition~\ref{prop:independence} when we translate the reduct of the \PI-calculus term as this often induces a slightly different form on the parameters.

For the other direction (soundness) we can only prove the weaker form $\PTRANS[P] \trans* T' \implies \exists P' . P \trans* P' \land T' \RHOWBISIM \PTRANS[P']$, due to the extra reductions deriving from the name server, replication, or requests for new names.
Thus we proceed by induction in the reduction sequence, and often again making use of Proposition~\ref{prop:independence}.

Having only the weaker form of completeness, with $\trans*$ instead of $\trans$, of course means that this statement in itself is not enough to verify that the translation does not introduce divergence.
We therefore prove this separately:
\begin{proposition}[Divergence reflection]\label{prop:divergence_reflection}
$\PTRANS[P] \trans^{\omega} \implies P \trans^{\omega}$.
\end{proposition}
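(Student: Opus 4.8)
I would prove the statement directly, by analysing where the reductions of $\PTRANS[P]$ come from. First I would partition every reduction of $\PTRANS[P]$, and of its reducts, into two classes: \emph{source} steps, namely the single communications that simulate a \PI-calculus reduction (a synchronisation between a translated input $\INPUT{x}{y}.\PTRANS[Q]$ and a translated output $\OUTPUT{x}{z}$ on a name $x$ that is the image of a \PI-name), and \emph{administrative} steps, comprising everything else: the two initialisation steps and the request/response steps of $\NAMESERVER[s]$, the name-server interactions triggered by the $\OUTPUT{v}{n}$ in the translation of $\NEW{x}Q$, and the copying reductions $\DCOPY{n} \PAR \LIFT{n}{R} \trans R \PAR \LIFT{n}{R}$ that unfold a replication. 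The whole argument then reduces to two facts: administrative steps alone cannot diverge, and each source step corresponds to exactly one \PI-reduction.

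The crux is the first fact, which I would isolate as a lemma: there is no infinite reduction sequence of (reducts of) $\PTRANS[P]$ consisting solely of administrative steps. The key observations, all read off the translation clauses, are: (i) because replication is \emph{input-guarded}, each unfolding $\DCOPY{n} \PAR \LIFT{n}{R} \trans R \PAR \LIFT{n}{R}$ produces a copy whose own $\DCOPY{n}$ sits under an input prefix, so a second unfolding of the same replication is impossible until that guard fires, and firing it is a source step; (ii) the name server $\NAMESERVER[s]$ performs work only in response to a request on $v$, and each such request yields only a bounded number of administrative steps before it returns to its ready state, as already observed after its definition; and (iii) a fresh occurrence of $\OUTPUT{v}{n}$, or a new replication, becomes active at top level only by exposing a guarded continuation, which again requires a source step, apart from the finitely many occurrences already exposed by the finite syntactic structure of $P$. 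Together these give a well-founded measure on administrative progress between source steps, from which the lemma follows.

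Granting the lemma, any infinite reduction sequence $\PTRANS[P] = T_0 \trans T_1 \trans \cdots$ must contain infinitely many source steps, since it cannot have an administrative-only suffix. It then remains to project these onto \PI-reductions. The analysis underlying the soundness direction of operational correspondence already matches each source step with exactly one \PI-communication (a synchronisation on the image of a \PI-name, with Proposition~\ref{prop:substitution} relating the substituted continuations) and each administrative step with no \PI-reduction. Projecting the infinite reduction sequence onto its source steps therefore yields an infinite reduction sequence $P = P_0 \trans P_1 \trans P_2 \trans \cdots$, that is, $P \trans^{\omega}$, as required.

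I expect the main obstacle to be the administrative-termination lemma, and specifically the interplay between the name server and replication: one must verify that no feedback loop lets the copying reductions and the name-server responses keep triggering one another without an intervening source step. This is precisely where the restriction to input-guarded replication is essential, since it guarantees that every replication unfolding is `charged' to a subsequent communication on the guarding input, so that only finitely many administrative steps can occur between any two source steps.
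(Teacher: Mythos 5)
Your proof is correct and rests on exactly the same key fact as the paper's: that reduction sequences arising from the name server, from requests for fresh names, and from unfolding an input-guarded replication are always of bounded length, so that divergence can only be driven by steps simulating \PI-communications. The organization differs slightly, though. The paper argues by induction on the clauses of the translation function, localizing the finiteness observation to each clause, whereas you analyse an arbitrary infinite reduction sequence of $\PTRANS[P]$ globally, partitioning its steps into source and administrative ones, isolating an administrative-termination lemma, and then projecting the infinitely many remaining source steps onto an infinite \PI-reduction sequence. Your version has the advantage of making explicit the two points the paper leaves implicit --- the well-founded measure on administrative progress between source steps, and the reliance on the soundness analysis of operational correspondence to match each source step with exactly one \PI-reduction --- and it correctly identifies input-guardedness as the reason no feedback loop between $\DCOPY{n}$-copying and name-server responses can arise; the price is that the projection step needs that correspondence already in hand, whereas the paper's structural induction can discharge it clause by clause. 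Either way the substance is the same and the argument goes through.
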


We show this by induction in the clauses of the translation function.
The matter is made easier by the fact that a reduction sequence related to the name server, requests for new names, or unfolding replication, is always of finite length: the name server takes two steps to evolve from its initial state to its ready state, where it blocks until it receives a request; serving a request requires two steps, and then two further steps to return to its ready state; and input-guarded replication takes a single step to unfold once, after which it blocks until it receives an input.

\section{A separation result}
The \RHO-calculus can encode the \PI-calculus, as we saw in the previous section.
However, the converse does not hold.
Under some general assumptions about the behavioural equivalence $\BEHEQ$ used in the target language, we can show that there cannot be an encoding of the \RHO-calculus into the \PI-calculus that satisfies our validity criteria from Definition~\ref{def:valid_encoding}.
This result relies on a simple observation about substitution in the \PI-calculus, namely that reduction is preserved under substitution:
\begin{lemma}\label{lemma:pi_substitution_reduction}
Let $\tsubst = \REPLACE{x}{n}$ be a substitution in the \PI-calculus, with $n \in \FN{P}$ and $\FRESH{x}P$.
Then $P \trans P' \implies P\tsubst \trans P'\tsubst$.
\end{lemma}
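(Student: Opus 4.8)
The plan is to proceed by induction on the derivation of the reduction $P \trans P'$, with cases according to the last rule applied (\nameref{pi:com}, \nameref{pi:res}, \nameref{r:par}, \nameref{r:struct}). The freshness hypothesis $\FRESH{x}{P}$ is the real workhorse here: since $x$ does not occur in $P$ and, by $\alpha$-conversion, need not occur in any bound position either, the substitution $\tsubst = \REPLACE{x}{n}$ acts as an \emph{injective} renaming of $n$ to the fresh name $x$ on the free names of $P$. It is precisely this injectivity that guarantees that communication subjects which match before substitution remain matched afterwards, so that the reduction is preserved rather than destroyed or duplicated.

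In the base case \nameref{pi:com} we have $P = \INPUT{a}{y}.Q \PAR \OUTPUT{a}{z}$ and $P' = Q\REPLACE{z}{y}$. After $\alpha$-renaming the bound name $y$ so that $y$ lies outside $\SET{x, n}$, the substitution $\tsubst$ distributes over the operators, giving $P\tsubst = \INPUT{a\tsubst}{y}.(Q\tsubst) \PAR \OUTPUT{a\tsubst}{z\tsubst}$. The two subjects are still literally the same name $a\tsubst$, so \nameref{pi:com} fires and $P\tsubst \trans (Q\tsubst)\REPLACE{z\tsubst}{y}$. It then remains to invoke the substitution-commutation identity $(Q\REPLACE{z}{y})\tsubst = (Q\tsubst)\REPLACE{z\tsubst}{y}$, which is the standard $\pi$-calculus substitution lemma and holds because $y$ was chosen fresh for $\tsubst$; this identifies the right-hand side with $P'\tsubst$, as required.

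The inductive cases are structural propagations of the hypothesis. For \nameref{r:par}, from $P_1 \trans P_1'$ the induction hypothesis gives $P_1\tsubst \trans P_1'\tsubst$, and since $\tsubst$ distributes over $\PAR$ we close with \nameref{r:par}. For \nameref{pi:res}, after $\alpha$-renaming the restricted name away from $\SET{x, n}$, $\tsubst$ commutes with restriction and the result follows from the induction hypothesis and \nameref{pi:res}. For \nameref{r:struct} we additionally use that structural congruence is preserved by substitution, i.e.\@ $P \STRUCTEQ P_1 \implies P\tsubst \STRUCTEQ P_1\tsubst$ (a routine induction on the congruence rules, again discharging the binder cases by $\alpha$-conversion); combining this with the induction hypothesis on the inner reduction and closing under \nameref{r:struct} yields the claim.

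One mild subtlety is that the hypothesis $n \in \FN{P}$ is not inherited by subterms in the \nameref{r:par} case. This is harmless: if $n \notin \FN{P_1}$ then $\tsubst$ is the identity on $P_1$, and since reduction never introduces new free names we also have $n \notin \FN{P_1'}$, so $P_1\tsubst = P_1$ and $P_1'\tsubst = P_1'$ and the implication holds trivially. Consequently it suffices to carry only the freshness hypothesis $\FRESH{x}{P}$ through the induction, and $n \in \FN{P}$ plays no role beyond fixing the case of interest for the separation argument. I expect the communication base case — the substitution-commutation identity, together with the bookkeeping needed to keep $x$ and $n$ out of binders — to be the only step requiring genuine care; every other case is a direct application of the induction hypothesis.
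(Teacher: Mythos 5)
Your proof is correct and follows the same route the paper takes: induction on the derivation of the reduction, with a nested induction on structural congruence for the \RULENAME{\PI-struct} case, the only delicate point being the substitution-commutation identity in the communication case (the paper gives only this one-sentence sketch). A minor remark: for this forward direction injectivity of $\tsubst$ is not actually needed — matched subjects stay matched under any substitution — so the freshness of $x$ serves mainly to avoid capture rather than being "the real workhorse," but this does not affect the validity of your argument.
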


This can easily be shown by induction in the semantic rules, and then with an extra induction in structural congruence for the \RULENAME{\PI-struct} rule.

Next, we consider our requirements for the notion of behavioural equivalence:
First of all, $\BEHEQ$ should obviously be an equivalence relation.
Secondly, it should in some sense preserve the semantics of the processes it equates: as we are here working in a reduction system, it should at least preserve reductions and observability, and it should be preserved under substitution:
\begin{definition}[Behavioural equivalence requirements]\label{def:beheq_requiements}
We require that $\BEHEQ$ be at least an equivalence relation over \PI-terms satisfying the following:\note{It wrongly said \protect\RHO-terms here before. Fixed.}
\begin{enumerate}
  \item\label{req:trans} $P_1 \BEHEQ P_2 \land P_1 \trans* P_1' \implies \exists P_2' \SUCHTHAT P_2 \trans* P_2' \land P_1' \BEHEQ P_2'$
  \item\label{req:obs} $P_1 \BEHEQ P_2 \land P_1 \WBARBSYM \widehat{x} \implies P_2 \WBARBSYM \widehat{x}$
  \item\label{req:subst} $P_1 \BEHEQ P_2 \implies P_1\tsubst \BEHEQ P_2\tsubst$
\end{enumerate}
\end{definition}

The requirements suggest that $\BEHEQ$ should be at least weak, barbed congruence, which does not seem too demanding.
However, we prefer to keep the formulation general, without committing to one specific notion of behavioural equivalence, to emphasise that other, stronger choices are also possible.
The following result will then hold for any such choice:

\begin{theorem}[Separation]\label{thm:separation}
If $\BEHEQ$ satisfies the requirements of Definition~\ref{def:beheq_requiements}, then there is no encoding of the \RHO-calculus into the \PI-calculus satisfying the criteria of Definition~\ref{def:valid_encoding}.
\end{theorem}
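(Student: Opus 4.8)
The plan is to isolate the one feature the \PI-calculus demonstrably lacks --- that a single reduction can expose, as a \emph{free} and observable name, a newly quoted process built from a name that has just been substituted in --- and to play it off against Lemma~\ref{lemma:pi_substitution_reduction}. The crucial asymmetry is that substitution in the \RHO-calculus does not descend under quotes, whereas reduction is free to re-quote; so substitution and reduction do not commute on the observable names they produce. I would therefore take as my source process
\[
  S \DEFSYM \INPUT{n}{y}.\LIFT{y}{\NIL} \PAR \LIFT{n}{\LIFT{n}{\NIL}},
\]
for which $\FN{S} = \SET{n}$ and $S \trans \LIFT{\LINCR{n}}{\NIL}$, exposing the output barb $\OUT{\LINCR{n}}$ on the freshly generated name $\LINCR{n} = \QUOTE{\LIFT{n}{\NIL}} \notin \FN{S}$. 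Picking a name $z$ with $\FRESH{z}S$ and setting $\ssubst \DEFSYM \REPLACE{z}{n}$, we obtain $S\ssubst \trans \LIFT{\LINCR{z}}{\NIL}$ with barb $\OUT{\LINCR{z}}$; but since $\LINCR{n} \not\NAMEEQ n$ and substitution leaves quoted subterms untouched, $\LINCR{n}\ssubst = \LINCR{n} \neq \LINCR{z}$. Thus $S\ssubst$ has a weak barb whose subject $\LINCR{z}$ is \emph{not} the $\ssubst$-image of any subject reachable from $S$.

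Assume now, for contradiction, a valid encoding $(\PTRANS[~](N),\PHI,\delta)$ of the \RHO-calculus into the \PI-calculus, with $\PHI$ injective (as is standard for renaming policies). Fix a set $\SETNAME{M} \supseteq \SET{n, z, \LINCR{n}, \LINCR{z}}$ of observable names with $N \INTERSECT \PHI\ARGS{\SETNAME{M}} = \EMPTYSET$, and put $\tsubst \DEFSYM \REPLACE{\PHI\ARGS{z}}{\PHI\ARGS{n}}$, which satisfies $\PHI\ARGS{\ssubst\ARGS{x}} = \tsubst\ARGS{\PHI\ARGS{x}}$ as demanded by substitution invariance (criterion~\ref{cond:substitution}). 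Applying observational correspondence (criterion~\ref{cond:observational_correspondence}) to the reduct $\LIFT{\LINCR{z}}{\NIL}$ of $S\ssubst$, together with operational correspondence (criterion~\ref{cond:operational_correspondence}) to pull the resulting weak barb back onto $\PTRANS[S\ssubst](N)$, yields $\PTRANS[S\ssubst](N) \WBARBSYM^{\PHI\ARGS{\SETNAME{M}}} \OUT{\PHI\ARGS{\LINCR{z}}}$. Substitution invariance then gives $\PTRANS[S\ssubst](N) \BEHEQ \PTRANS[S](N)\tsubst$, so by requirement~\ref{req:obs} of Definition~\ref{def:beheq_requiements} the target term $\PTRANS[S](N)\tsubst$ must itself exhibit the weak barb $\OUT{\PHI\ARGS{\LINCR{z}}}$.

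This is where Lemma~\ref{lemma:pi_substitution_reduction} delivers the contradiction. Because $\FRESH{z}S$ makes $\PHI\ARGS{z}$ fresh for $\PTRANS[S](N)$, the renaming $\tsubst$ is a fresh-for-free substitution on the \PI-calculus side; applying Lemma~\ref{lemma:pi_substitution_reduction} to $\tsubst$ and to its inverse $\REPLACE{\PHI\ARGS{n}}{\PHI\ARGS{z}}$ shows that the weak barbs of $\PTRANS[S](N)\tsubst$ are \emph{exactly} the $\tsubst$-images of the weak barbs of $\PTRANS[S](N)$, hence (by correspondence with $S$) the $\tsubst$-images of $\PHI\ARGS{\SET{n, \LINCR{n}}}$. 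Since $\tsubst$ sends $\PHI\ARGS{n}$ to $\PHI\ARGS{z}$ and fixes $\PHI\ARGS{\LINCR{n}}$ (as $\LINCR{n} \neq n$ and $\PHI$ is injective), no such barb has subject $\PHI\ARGS{\LINCR{z}}$, because injectivity gives $\PHI\ARGS{\LINCR{z}} \notin \SET{\PHI\ARGS{z}, \PHI\ARGS{\LINCR{n}}}$ --- contradicting the previous paragraph.

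I expect the real work to lie not in this final collision but in the bookkeeping that justifies the word ``exactly.'' Three points require care: first, transferring barbs through compositionality (criterion~\ref{cond:parallel}) means accounting for the coordinating context and for the split of the parameter list $N$ over the parallel components of $S$, using $N \INTERSECT \PHI\ARGS{\SETNAME{M}} = \EMPTYSET$ to keep the names these introduce invisible to $\WBARBSYM^{\PHI\ARGS{\SETNAME{M}}}$; second, the claim that $\PTRANS[S](N)\tsubst$ has \emph{no} barbs beyond the $\tsubst$-renamings of those of $\PTRANS[S](N)$ needs Lemma~\ref{lemma:pi_substitution_reduction} in its reflecting direction, obtained by running it along the inverse renaming and exploiting freshness of $\PHI\ARGS{z}$; and third, one must discharge the freshness side conditions so that the observable names chosen for the argument never coincide with the auxiliary names the encoding introduces for itself.
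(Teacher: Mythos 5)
Your proposal is correct in substance, and it exploits the same underlying phenomenon as the paper --- a reduction that mints a fresh, observable, \emph{free} name by quoting, combined with the fact that substitution does not descend under quotes --- but it runs the argument in the opposite direction. The paper's proof takes $P = \LIFT{a}{\DROP{x_1}\PAR\DROP{x_2}} \PAR \INPUT{a}{n}.\LIFT{n}{\NIL}$, which creates $u = \QUOTE{\DROP{x_1}\PAR\DROP{x_2}}$, and substitutes the \emph{generated} name: with $\tsubst = \REPLACE{m}{\PHI\ARGS{u}}$ for fresh $m$, it uses compositionality, substitution invariance and parameter independence to show the substitution is vacuous on $\PTRANS[P](N)$ (so $\PTRANS[P](N)\tsubst \BEHEQ \PTRANS[P](N)$), while Lemma~\ref{lemma:pi_substitution_reduction} plus operational/observational correspondence force $\PTRANS[P](N)\tsubst \WBARBSYM m$ but $\PTRANS[P](N) \not\WBARBSYM m$, contradicting requirement~\ref{req:obs}. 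You instead substitute an \emph{existing} free name ($n \mapsto z$) and observe that the generated barb moves from $\LINCR{n}$ to $\LINCR{z}$, which is not the $\ssubst$-image of any barb of $S$; your contradiction is that an injective fresh renaming in the \PI-calculus cannot create a barb on a name it does not hit. What your route buys is that it dispenses with compositionality and parameter independence altogether; what it costs is a heavier load on Lemma~\ref{lemma:pi_substitution_reduction}, which as stated covers only one direction of reduction and says nothing about barbs --- you correctly flag that you need the reflecting direction via the inverse renaming plus barb tracking, all standard for the \PI-calculus. One caveat you should make explicit: $\FRESH{z}S$ does \emph{not} by itself give $\PHI\ARGS{z} \notin \FN{\PTRANS[S](N)}$, since nothing in Definition~\ref{def:valid_encoding} stops the translation from mentioning extraneous target names; but since $\FN{\PTRANS[S](N)}$ is finite and does not depend on $z$, you can simply choose $z$ (and hence $\tsubst$) after fixing the translation, which is exactly the kind of bookkeeping you anticipate in your closing remarks. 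With that repair, and the (standard, but not stated in the paper) injectivity assumption on $\PHI$, your argument goes through.
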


\begin{proof}
Assume to the contrary that there exists a translation $\PTRANS[~](N) : \PROC[rho] \to \PROC[pi]$ satisfying the criteria of Definition~\ref{def:valid_encoding}.
We show that this leads to a contradiction.\note{I reordered and expanded the text of this proof for R1}
Firstly, let $u \DEFSYM \QUOTE{\DROP{x_1} \PAR \DROP{x_2}}$, and consider the processes $P$ and $P'$ where 
\begin{equation*}
  P   \DEFSYM P_1 \PAR P_2 \qquad
  P_1 \DEFSYM \LIFT{a}{\DROP{x_1} \PAR \DROP{x_2}} \qquad
  P_2 \DEFSYM \INPUT{a}{n} . \LIFT{n}{\NIL} \qquad
  P'  \DEFSYM \LIFT{u}{\NIL}
\end{equation*}

\noindent Thus $P = \LIFT{a}{\DROP{x_1} \PAR \DROP{x_2}} \PAR \INPUT{a}{n} . \LIFT{n}{\NIL}$ and clearly $P \not\BARBSYM u$ and $u \notin \FN{P}$, but $P \trans \LIFT{\QUOTE{\DROP{x_1} \PAR \DROP{x_2}}}{\NIL} = \LIFT{u}{\NIL} = P'$ and $P' \BARBSYM u$.

Consider now the substitution $\tsubst \DEFSYM \REPLACE{m}{\PHI\ARGS{u}}$ for some fresh name $m$, i.e.\@ $m \neq \PHI\ARGS{u}$ and with $m \notin \FN{\PTRANS[P](N)}$.
$P \trans P'$ gives, by criterion~\ref{cond:operational_correspondence} (operational completeness), that $\PTRANS[P](N) \trans* T'$ and $T' \BEHEQ \PTRANS[P'](N')$ for some $T'$ and $N'$ derivable from $N$.
By criterion~\ref{cond:substitution} (substitution invariance), $\tsubst\ARGS{\PHI\ARGS{u}} = m$ implies $\exists \ssubst . \PHI\ARGS{\ssubst\ARGS{u}} = m$, so we can combine $\ssubst$ with the observability predicate.
By criterion~\ref{cond:observational_correspondence} (observational correspondence), since $P' \not\BARBSYM \ssubst\ARGS{u}$, we therefore also have that $\PTRANS[P'](N) \not\WBARBSYM m$.
This establishes that 
\begin{equation*}
  \PTRANS[P](N) \trans* T' \land T' \BEHEQ \PTRANS[P'](N') \land \PTRANS[P'](N') \not\WBARBSYM m
\end{equation*}

\noindent as expected.
By requirement~\ref{req:obs} in Definition~\ref{def:beheq_requiements}, since $\PTRANS[P'](N') \BEHEQ T'$, it must therefore also be the case that $T' \not\WBARBSYM m$, and hence that $\PTRANS[P](N) \not\WBARBSYM m$.

Now consider the term $\PTRANS[P](N)\tsubst$:
Lemma~\ref{lemma:pi_substitution_reduction} yields $\PTRANS[P](N)\tsubst \trans* T'\tsubst \BEHEQ \PTRANS[P'](N')\tsubst$, and by criterion~\ref{cond:substitution} (substitution invariance) $\PTRANS[P'](N')\tsubst \BEHEQ \PTRANS[P'\ssubst]$.
As we know that $P' \BARBSYM u$, this implies that $P'\ssubst \BARBSYM \ssubst\ARGS{u}$, which again implies that $\PTRANS[P'\ssubst](N') \WBARBSYM \tsubst\ARGS{\PHI\ARGS{u}}$, which implies $\PTRANS[P'](N')\tsubst \WBARBSYM m$.
This establishes that
\begin{equation*}
  \PTRANS[P](N)\tsubst \trans* T'\tsubst \land T'\tsubst \BEHEQ \PTRANS[P'](N')\tsubst \land \PTRANS[P'](N')\tsubst \WBARBSYM m
\end{equation*}

\noindent again, as expected.
By requirement~\ref{req:obs} in Definition~\ref{def:beheq_requiements}, since $\PTRANS[P'](N')\tsubst \BEHEQ T'\tsubst$, it must therefore also be the case that $T'\tsubst \WBARBSYM m$, and hence that $\PTRANS[P](N)\tsubst \WBARBSYM m$.

However, consider now the effect of applying the substitution $\PTRANS[P](N)\tsubst$.
By criterion~\ref{cond:parallel} (compositionality), we have that
\begin{equation*}
  \PTRANS[P](N)\tsubst
  = C\tsubst \PAR \PTRANS[P_1]({N_1})\tsubst \PAR \PTRANS[P_2]({N_2})\tsubst 
  = C \PAR \PTRANS[P_1]({N_1})\tsubst \PAR \PTRANS[P_2]({N_2})\tsubst 
\end{equation*}

\noindent where we can eliminate the substitution from $C$, since $\PHI\ARGS{u} \notin N \UNION N_1 \UNION N_2$, as this immediately would violate criterion~\ref{cond:parameter_independence} (parameter independence); and as we know that $u \notin \FN{P}$, we therefore also know that $\PHI\ARGS{u} \notin \FN{C}$, since $C$ at most can contain a subset of the (\PHI-translated) free names of the process and the parameters.
Thus the substitution has no effect on $C$.

Now consider the two subterms $\PTRANS[P_1]({N_1})\tsubst$ and $\PTRANS[P_2]({N_2})\tsubst$.
By criterion~\ref{cond:substitution}, $\PTRANS[P_1]({N_1})\tsubst \BEHEQ \PTRANS[P_1\ssubst](N_1)$ and $\PTRANS[P_2]({N_2})\tsubst \BEHEQ \PTRANS[P_2\ssubst](N_2)$, but when we apply the substitution, we get that
\begin{equation*}
  P_1\ssubst = \PAREN{\LIFT{a}{\DROP{x_1} \PAR \DROP{x_2}}}\ssubst = \LIFT{a}{\DROP{x_1} \PAR \DROP{x_2}} = P_1 \qquad
  P_2\ssubst = \PAREN{\INPUT{a}{n} . \LIFT{n}{\NIL}}\ssubst = \INPUT{a}{n} . \LIFT{n}{\NIL} = P_2
\end{equation*}

\noindent since obviously $u \notin\FN{P_1}$ and $u \notin\FN{P_2}$, so the substitution has no effect on any of the subterms.
Thus 
\begin{equation*}
  C \PAR \PTRANS[P_1\ssubst]({N_1}) \PAR \PTRANS[P_2\ssubst]({N_2}) = C \PAR \PTRANS[P_1]({N_1}) \PAR \PTRANS[P_2]({N_2})
\end{equation*}

\noindent and hence $\PTRANS[P\ssubst](N) = \PTRANS[P](N)$.
By criterion~\ref{cond:substitution} (substitution invariance) $\PTRANS[P\ssubst](N) \BEHEQ \PTRANS[P](N)\tsubst$, and thus we have that $\PTRANS[P](N)\tsubst \BEHEQ \PTRANS[P](N)$.
This then yields the desired contradiction, since, as established above, $\PTRANS[P](N)\tsubst \WBARBSYM m$ but $\PTRANS[P](N) \not\WBARBSYM m$, whilst by requirement~\ref{req:obs} of Definition~\ref{def:beheq_requiements} it must hold that $\PTRANS[P](N)\tsubst \BEHEQ \PTRANS[P](N) \land \PTRANS[P](N)\tsubst \WBARBSYM m \implies \PTRANS[P](N) \WBARBSYM m$.
\end{proof}

The above proof exploits the reflective capability of the \RHO-calculus to create new, \emph{free} names at runtime, which are therefore also observable and substitutable.
Thus, a substitution can affect the reduct of a process, without affecting the process itself, if the reduction step creates a new name.
This cannot be mimicked in the \PI-calculus, where names have no structure and cannot be composed at runtime.
Any new \emph{free} name appearing at runtime can therefore only come from the translation parameters, since it cannot come from the source term; but this would then violate the criterion of parameter independence, since we would then have to choose the parameters such that they correspond to the names that will be created at runtime.

This result does not directly depend on the higher-order characteristics of the \RHO-calculus, and adding higher-order behaviour to the \PI-calculus would not suffice to enable it to encode the \RHO-calculus.
In \cite{HOPICALC}, Sangiorgi gave an encoding of the Higher-Order \PI-calculus, \HOPI{}, in the \PI-calculus.
His encoding also satisfies our criteria from Definition~\ref{def:valid_encoding}, and we therefore also have the following result:
\begin{corollary}
There is no encoding of the \RHO-calculus into \HOPI{} satisfying the criteria of Definition~\ref{def:valid_encoding}, when $\BEHEQ$ satisfies the requirement in Definition~\ref{def:beheq_requiements}.
\end{corollary}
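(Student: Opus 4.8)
The plan is to argue by composition of encodings. Suppose, for contradiction, that there were a valid encoding of the \RHO-calculus into \HOPI{}, say with renaming policy $\PHI_1$ and name derivation function $\delta_1$. By the result of Sangiorgi \cite{HOPICALC, SANGIORGIPHD}, there is also an encoding of \HOPI{} into the \PI-calculus, with policy $\PHI_2$ and derivation function $\delta_2$, satisfying the criteria of Definition~\ref{def:valid_encoding}. I would then form the \emph{composite} encoding of the \RHO-calculus directly into the \PI-calculus: its translation function applies the outer (Sangiorgi) translation to the output of the inner one, its renaming policy is $\PHI_2 \circ \PHI_1$, and its name derivation function acts as $\delta_1$ on the parameters of the inner encoding and as $\delta_2$ on those of the outer one, the two finite parameter lists being concatenated. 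The goal is to show that this composite is itself a valid encoding of the \RHO-calculus into the \PI-calculus, which Theorem~\ref{thm:separation} forbids.

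The bulk of the work is to check that each of the six criteria of Definition~\ref{def:valid_encoding} is inherited by the composite. Compositionality (criterion~\ref{cond:parallel}) composes by a purely syntactic calculation: the inner translation distributes over parallel composition up to a coordinating context $C_1$, and applying the outer translation --- which itself distributes up to a context $C_2$ --- yields the composite context $C_2 \PAR \PTRANS[C_1]$, with the required bound on its free names following from the two separate bounds and $\PHI_2 \circ \PHI_1$. Substitution invariance (criterion~\ref{cond:substitution}) composes by chaining the two invariances along the policies, using that a source substitution on \RHO-names induces a \HOPI{}-substitution through $\PHI_1$ and thence a \PI-substitution through $\PHI_2$. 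Operational and observational correspondence (criteria~\ref{cond:operational_correspondence} and~\ref{cond:observational_correspondence}) and parameter independence (criterion~\ref{cond:parameter_independence}) compose transitively, relaying each correspondence from the \RHO-level to the \HOPI{}-level and then to the \PI-level; here the requirements on $\BEHEQ$ from Definition~\ref{def:beheq_requiements} (closure under reduction, observation and substitution, and being an equivalence) are exactly what is needed to combine the intermediate equivalences. Divergence reflection (criterion~\ref{cond:divergence_reflection}) composes contrapositively, since an infinite \PI-reduction reflects to an infinite \HOPI{}-reduction and thence to an infinite \RHO-reduction.

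The hard part will be the two places where a behavioural equivalence must be carried \emph{across} the outer translation: the completeness step of operational correspondence and that of parameter independence produce a \HOPI{}-equivalence $U \BEHEQ V$ that must be turned into a \PI-equivalence between the outer translations $\PTRANS[U]$ and $\PTRANS[V]$. This preservation of equivalence is not guaranteed by the Gorla-style criteria alone, but it does hold for Sangiorgi's encoding, which is fully abstract for the relevant congruence; this is the only extra property of the outer encoding I would invoke. The remaining delicate point is the parameter bookkeeping: I must check that the disjointness condition $N \INTERSECT \PHI\ARGS{\SETNAME{M}} = \EMPTYSET$ of criterion~\ref{cond:observational_correspondence} survives for the concatenated parameter list under $\PHI_2 \circ \PHI_1$, and that the composite $\delta$ tracks both parameter segments coherently, so that $N \leadsto N'$ at the composite level lifts the two underlying derivations simultaneously. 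Once this is discharged, the composite is a valid encoding of the \RHO-calculus into the \PI-calculus, contradicting Theorem~\ref{thm:separation}; hence no valid encoding of the \RHO-calculus into \HOPI{} can exist. (Alternatively, one may bypass the outer encoding entirely and replay the proof of Theorem~\ref{thm:separation} with \HOPI{} in place of the \PI-calculus, which requires only the evident \HOPI{}-analogue of Lemma~\ref{lemma:pi_substitution_reduction}, since \HOPI{}-names are atomic and reduction is likewise preserved under name substitution.)
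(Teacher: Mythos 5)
Your proposal follows exactly the paper's argument: compose the hypothetical encoding of the \RHO-calculus into \HOPI{} with Sangiorgi's encoding of \HOPI{} into the \PI-calculus, and invoke Theorem~\ref{thm:separation} to rule out the resulting composite. The paper states this in a single sentence, whereas you additionally work through how each criterion of Definition~\ref{def:valid_encoding} is inherited by the composite and correctly flag the one genuine subtlety (transporting a \HOPI{}-level equivalence across the outer translation); this is a sound elaboration of the same route.
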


Indeed, if such an encoding existed, we could compose it with the encoding of \HOPI{} into the \PI-calculus, to obtain an encoding of the \RHO-calculus into the \PI-calculus, in contradiction of Theorem~\ref{thm:separation}.
This also indicates that the key feature of the \RHO-calculus which cannot be represented in the \PI-calculus, is not its higher-order characteristics per se, but rather its capability for reflection, which gives it higher-order characteristics as a by-product.

\section{Related works}
The issues of encodability and assessing the relative expressiveness of various process calculi has been considered by several authors; in particular, Gorla \cite{GORLA} proposed a framework for reasoning about encodability and separation w.r.t.\@ a set of criteria that also served as inspiration for the criteria used in the present paper.
Towards the end of the paper, Gorla also discusses some of the difficulties involved in formulating a \emph{general} framework for encodability in the presence of parameters, which particularly pertain to the question of which language the names belong to (the source or the target).
In the present case, the answer is clearly the target language, which is further underscored by our restrictions on observability and compositionality; i.e.\@ that the parameters should not be observable in the source term; and that, for each recursive call to the translation function, the parameters should be derivable from the initial set.
Furthermore, we have added the criterion of parameter independence.
We believe that such a criterion will generally be necessary for encodings that allow the set of parameters to `evolve' or be updated in some structured way during the course of the translation, which seems particularly likely when we are working with structured names or terms. 
More recently, van Glabbeek \cite{glabbeek2018encodingexpressiveness} has also proposed a definition of a valid encoding, which he derives from a notion of a semantic equivalence or preorder, rather than basing it on a list of commonly agreed-upon criteria (as we have done in the present paper, following Gorla).
However, this work also does not consider parametrised translations.

Also related is the work by Carbone and Maffeis \cite{CARBONEMAFFEIS} on expressivity of polyadic synchronisation.
Their \EPI-calculus substitutes names for names (as in the \PI-calculus), but allows $n$-ary vectors of names \mbox{$x_1 \cdot \ldots \cdot x_n$} of arbitrary length $n \geq 0$ to appear in \emph{subject} position of input/output prefixes, and subjects must then match on all $n$ names to yield a reduction.
Thus name vectors can be altered at runtime, but they cannot grow in length as in the \RHO-calculus.
However, we could conceive of a (purposefully ill-sorted) variant of \EPI{} that would allow entire vectors of names to be substituted for single names, thereby allowing new vectors of increasing length to be composed at runtime.
We do not know if such a calculus could encode the \RHO-calculus, but we suspect that it might, if equipped with an appropriate notion of name equivalence.

Another approach to using structured terms as names is given by Bengtson et al.\@ \cite{bengtson2009psi, bengtson2011psi} and Parrow et al.\@ \cite{parrow2014higher} in their work on \PSI-calculi, which is based on the theory of nominal sets and datatypes by Gabbay and Pitts \cite{gabbay2002nominal}.
\PSI-calculi allow both subjects and objects to be terms from an arbitrary nominal datatype, and with substitution of terms for names. 
This enables runtime composition of terms, and, notably, the \RHO-calculus \emph{can} be instantiated as a (higher-order) \PSI-calculus, as the present author and others have shown in \cite{huttel2022hopsitypes}.

\section{Conclusion}
The original \RHO-calculus paper \cite{RHOCALC} by Meredith and Radestock raises some interesting questions about the nature of names in process calculi.
By including name generation in the language, it forces any process to give an explicit account of the source of any fresh names required during its execution, whilst this is entirely implicit in the \PI-calculus with the $\NEW{x}P$ operator.
This adds a degree of realism to the \RHO-calculus, which may be relevant from an implementation perspective, but also requires some extra care when we wish to reason about it formally.
For example, Meredith and Radestock attempted to show that the \PI-calculus can be interpreted in the \RHO-calculus, but their encoding did not properly account for the invariant that must hold for the names used as parameters in their encoding; i.e.\@ that the parameters always refer to the most recently replicated names, leading to two errors that invalidate their correctness result.
The purpose of the present paper has been to describe these errors and then give a new encoding of the \PI-calculus, for which we have shown correctness w.r.t.\@ a set of criteria for encodability close to those proposed by Gorla \cite{GORLA}.
The main difference is that we here use a parametrised translation, and we therefore had to take parameters into account in our criteria.
This seems unavoidable when we are working with a calculus with structured names like the \RHO-calculus, where all names are global and cannot be declared at runtime.

Our encoding works, modulo the criteria in Definition~\ref{def:valid_encoding}; yet it may not be an entirely satisfactory solution in at least one regard: the name server acts as a single, central source of fresh names.
If we consider the scenarios one might wish to \emph{model} in the \PI-calculus, having such a single central process might be acceptable for e.g.\@ models of programs running on a single computer, or models of client-server systems with a star topology.
However, for distributed systems with a different network topology, the translation would not yield an adequate representation.
Thus, the encoding may preserve the semantics of a program, but not necessarily the intuitions underlying its structure.
We could instead conceive of a more elaborate encoding, where e.g.\@ each replication also instantiates its own copy of a name server to service the replicated processes.
This would be closer to the intention in the encoding by Meredith and Radestock; but as we have seen, one would then have to be careful to ensure that each replica of the name server will generate a distinct namespace to avoid the possibility of a name clash.
This could be achieved by letting each replica first request fresh names for all its parameters, including the namespace root $s$ which must then be composed or otherwise shifted into a new namespace.
Yet this creates a scaffolding problem, where, in order to instantiate a new source of fresh names, one must first have a source of fresh names.
It does not remove the need for an initial, `top level' instance of the name server.
These considerations illustrate some of the difficulties involved in working with, and reasoning about, structured names with global visibility.
None of these problems are present in the \PI-calculus, yet any implementation of a \PI-calculus program would need to include a solution to the problem of obtaining fresh names.
In the words of Meredith and Radestock \cite{RHOCALC}, the \PI-calculus does not provide a `theory of names.'

We have also shown that the \PI-calculus cannot encode the \RHO-calculus in a way that satisfies the same criteria, modulo some requirements on the notion of behavioural equivalence $\BEHEQ$ used in Definition~\ref{def:beheq_requiements}.
The key to this separation result seems precisely to be the ability of the \RHO-calculus to create new \emph{free} names at runtime, which cannot be mimicked in the \PI-calculus.
This ability is a consequence of \emph{reflection} in the \RHO-calculus, which also gives it higher-order characteristics as a by-product.
In a process-calculus setting where computation is modelled as communication, higher-order behaviour appears as just a special case of reflection, where processes (code) are transmitted without modification. 
Thus, the separation result is also interesting in light of a remark by Sangiorgi regarding the encodability of \HOPI{} into the \PI-calculus.
He notes that this \emph{``[\ldots] proves that the first-order paradigm, being by far simpler, should be taken as basic. Such a conclusion takes away the interest in the opposite direction, namely the representability of the \PI-calculus within a language using purely communications of agents \ldots''} \cite[p.\@ 8]{SANGIORGIPHD}.
But as we have seen, this does not seem to hold in the more general case where higher-order characteristics derive from the capability of reflection.
The \RHO-calculus purely uses communication of agents (processes), because names and processes are the same thing.

\section*{Acknowledgements}
The author wishes to thank Hans Hüttel, Bjarke B. Bojesen and Alex R. Bendixen for many discussions of the \RHO-calculus, and Luca Aceto and the anonymous reviewers for their numerous and invaluable comments on earlier drafts of this paper.

\bibliographystyle{eptcs}
\bibliography{literature}
\end{document}
